\newcommand{\be}{\begin{equation}}
\newcommand{\ee}{\end{equation}}
\newcommand{\bea}{\begin{eqnarray}}
\newcommand{\eea}{\end{eqnarray}}
\newcommand{\dst}{\displaystyle}
\newcommand{\fr}[2]{\frac{{\dst #1}}{{\dst #2}}}
\renewcommand{\Re}{\mbox{Re }}
\renewcommand{\Im}{\mbox{Im }}
\newcommand{\stolb}[3]{ \left( \begin{array}{c}#1 \\ #2 \\ #3\end{array}\right) }
\newcommand{\stolbik}[2]{ \left( \begin{array}{c}#1 \\ #2 \end{array}\right) }
\newcommand{\lr}[1]{ \langle #1 \rangle}
\newtheorem{theorem}{Theorem}
\newtheorem{proposition}[theorem]{Proposition}
\def\lsim{\mathrel{\rlap{\lower4pt\hbox{\hskip1pt$\sim$}}
    \raise1pt\hbox{$<$}}}         
\def\gsim{\mathrel{\rlap{\lower4pt\hbox{\hskip1pt$\sim$}}
    \raise1pt\hbox{$>$}}}         
\title{General two-order-parameter Ginzburg-Landau model with quadratic and quartic interactions}
\author{I.P. Ivanov\thanks{E-mail: Igor.Ivanov@ulg.ac.be}\\
  {\small Interactions Fondamentales en Physique et en Astrophysique, Universit\'{e} de Li\`{e}ge,} \\
  {\small All\'{e}e du 6 Ao\^{u}t 17, b\^{a}timent B5a, B-4000 Li\`{e}ge, Belgium}\\
  {\small and}\\
  {\small Sobolev Institute of Mathematics, Koptyug avenue 4, 630090, Novosibirsk, Russia}}
\begin{document}
\maketitle

\begin{abstract}
Ginzburg-Landau model with two order parameters appears in many condensed-matter problems.
However, even for scalar order parameters, the most general $U(1)$-symmetric Landau potential
with all quadratic and quartic terms contains 13 independent coefficients
and cannot be minimized with straightforward algebra.
Here, we develop a geometric approach that circumvents this computational difficulty
and allows one to study properties of the model without knowing the exact position
of the minimum.
In particular, we find the number of minima of the potential, classify explicit symmetries
possible in this model, establish conditions when and how these symmetries are spontaneously
broken, and explicitly describe the phase diagram.
\end{abstract}

\section{Introduction}

The Landau theory \cite{Landau,LL5} offers a remarkably economic description
of phase transitions associated with symmetry breaking.
This breaking is described
by an order parameter $\psi$: the high symmetry phase corresponds to
$\psi=0$, while the low symmetry phase is described by
$\psi \not =0$.
Very often, the order parameter can be directly related to
physically observable quantities, such as, for example,
distortion of the crystal lattice or spontaneous magnetization,
The local version of the Landau theory with a coordinate-dependent
order parameter, known as Ginzburg-Landau (GL) theory,
is the basis of the phenomenological theory of superconductivity, \cite{GLoriginalSC}.
For a variety of applications of the Landau theory to various condensed-matter problems,
see e.g. \cite{GL} and references therein.

In order to find if a given system is in its high or low symmetry
phase, one constructs a Landau potential
that depends on the order parameter, and then finds its minimum.
For a complex order parameter, its classic form \cite{Landau,LL5} is
\be
V(\psi) = - a |\psi|^2 + {b \over 2} |\psi|^4 + o(|\psi|^4)\,.\label{landau}
\ee
Near the phase transition the higher order terms $o(|\psi|^4)$ are usually assumed to be negligible.
The values of the coefficients $a$ and $b$ and their dependence on temperature, pressure, etc.
can be either calculated from a microscopic theory, if it is available,
or considered as free parameters in a phenomenological approach.
The phase transition associated with the symmetry breaking takes place
when an initially negative $a$ becomes positive, and the minimum of the potential (\ref{landau})
shifts from zero to
\be
\lr{\psi} = \sqrt{{a \over b}} \, e^{i\alpha}\,,
\ee
with an arbitrary phase $\alpha$.

Many systems are known in which two competing order parameters (OP) coexist.
Among them are the general $O(m)\oplus O(n)$-symmetric models, \cite{EXOmOn},
the models with two interacting $N$-vector OPs with $O(N)$ symmetry, \cite{EXtwoOn};
spin-density-waves in cuprates, \cite{EXspindensity};
competition between antiferromagnetism and superconductivity, \cite{EXantiferrosuper};
${}^4$He with its interplay of crystalline and superfluid ordering, \cite{EXcrystallinesuper};
multicomponent, \cite{EXmulticomponentsuper}, non-conventional two-dimensional, \cite{EXtwodimensional},
spin-triplet $p$-wave, \cite{EXtripletsuper},
and two-gap, \cite{EXtwoband,EXtwogapGurevich}, superconductivity,
with its application to magnetism in neutron stars, \cite{EXneutronstars};
two-band superfluidity, \cite{EXtwobandfluid};
and even mechanisms of electroweak symmetry breaking beyond the Standard Model
such as the two-Higgs-doublet model (2HDM), \cite{EX2HDM}.

To describe such a situation within GL theory,
one constructs a Landau potential similar to (\ref{landau}),
which depends on two order parameters, $\psi_1$ and $\psi_2$.
With scalar order parameters, it can be written generically as
\be
V(\psi_1,\psi_2) = - a_{ij} (\psi^*_i \psi_j) + {1 \over 2} b_{ijkl} (\psi^*_i \psi_j)(\psi^*_k \psi_l)\,,
\quad i,j,k,l=1,2\,.\label{Vijkl}
\ee
The coefficients of this potential can be
considered independent although in each particular
application they might obey specific relations.
One thus arrives at the general two-order-parameter (2OP) GL model
with quadratic and quartic terms.

Once Landau potential (\ref{Vijkl}) is written, the next step is to find
its minimum, i.e. to solve the static homogeneous Ginzburg-Landau equations.
A rather surprising fact is that these equations cannot be solved
with straightforward algebra. Differentiating the Landau potential
with respect to $\psi_i$ leads to a system of coupled algebraic
equations of third order, whose total
degree of algebraic complexity is six,
which makes it impossible to solve in the general case.

In this paper we argue that despite this computational problem, there remains something
that one can learn about the most general two-order-parameter model
in the mean-field approximation: its {\em phase diagram}.
As we will show below, it is possible to classify all the phases according
to the symmetries of the model and properties of the ground state.

This idea is not new. In fact, there exists an extensive
literature dating back to 1980's on minimization of
$G$-invariant potentials with several OPs ($G$ being a group of transformations of OPs), see e.g.
\cite{GL,sartori,Kim} and references therein. These works exploit
the fact that the problem becomes simpler when reformulated in the
orbit space instead of the space of order parameters themselves,
\cite{michel}. This orbit space is naturally sliced into several strata,
which are linked to the allowed phases of the model.
To describe them, one constructs the ring of $G$-invariant polynomials
of the order parameters and finds the Minimal Integrity Basis (MIB)
of this ring. Different strata (i.e. different phases) correspond to
some particular relations among the MIB polynomials.
This general method has been applied to classification of the phases of several
relevant physical systems, for example, to $p$-wave superfluidity
of ${}^3$He, \cite{Bruder}, $D$-wave condensates, \cite{Dwave},
2HDM, \cite{sartori2HDM,gufan2HDM}, and to general Landau models with
multicomponent and even multidimensional order parameters,
\cite{Kimmultidimensional,gufanMulti}.

In the light of this activity, it is somewhat surprising that the
most general GL model with two complex order parameters and
with the most general potential in the form of (\ref{Vijkl}) has never been studied in complete detail.
Here we fill this gap showing that in this case
the analysis can be pushed much farther than in the general situation,
with important physical consequences.

The approach presented here is based on the reparametrization symmetry of the model,
which allows one to establish the Minkowski-space structure
of the orbit space. The minimization problem admits a transparent geometric
interpretation, which leads to several theorems concerning the properties
of the global minimum.
Specific application of this approach to the 2HDM was
given in \cite{mink,minknew}.
Here we analyze the case of two local order parameters
$\psi_1(\vec r)$ and $\psi_2(\vec r)$ in the general context,
which can be relevant also for many condensed-matter problems.

\subsection{Geometric approach vs. Minimal Integrity Basis method}

Let us stress from the very start the essential differences between the geometric
analysis of the present paper and the standard approach
based on the Minimal Integrity Basis (MIB) technique.

The first difference lies in the scope of these two approaches.
The MIB leads to interesting results in the cases when the
potential is invariant under a non-trivial group $G$ of transformation of the order parameters.
The larger $G$, the richer is the spectrum of possible patterns
of its spontaneous violation. In particular, MIB methods has nothing to say
if $G$ is the trivial group.

In a typical situation one takes a highly symmetric $G$-invariant potential
constructed from powers of several multidimensional order parameters up to a certain degree,
builds various invariants, finds the ones that form MIB, and classifies the possible phases
according to relations among these invariants.
This approach is rather general in a sense that it can be applied, in principle,
to any number and any dimensions of order parameters. However, because of $G$-invariance,
the potentials usually contain very few terms.

The geometric approach presented here is limited to the particular case
of two complex scalar or vector order parameters, and to the
fourth-degree potentials. However, within these restrictions,
we manage to work out the {\em most general} model with
all possible types of the OP interactions.
The only symmetry that we impose is the $U(1)$-symmetry of the free energy density,
which is a reasonable choice from the physical point of view.
In this aspect, our analysis is more general than the MIB approach:
we just take two OPs, construct the free energy density in its full complexity,
and study {\em everything} that can ever happen in this model.
That is, we analyze all possible symmetry groups $G$
and all possible patterns of symmetry breaking.

The second difference concerns the procedures and the results of these two approaches.
In a situation when several phases are possible, one wants to know which
phase corresponds to the ground state of the model ({\em i.e.} which phase is stable).
In the usual MIB method one can do nothing but explicitly solve
the algebraic equations and check the minimum conditions.
This can be done only if the equations are simple enough,
which in turns happen when the free energy density is simple.
Thus, only sufficiently symmetric potentials are fully tractable with the MIB method.
Examples cited in \cite{gufan2HDM,gufanMulti} are precisely of this type.

In terminology suggested by \cite{toledanodmitriev}, one should distinguish between the
{\em angular problem} (classifying all the allowed phases) and the {\em radial problem}
(actually finding the position of the absolute minimum of a given potential).
MIB methods allow one to solve the angular, but not the radial problem.

In the case of the most general 2OP GL model, with its large number of free parameters,
this algebra cannot be worked out explicitly.
One ends up with a general algebraic equation of sixth order, which one cannot solve analytically.
Thus, one is unable to solve the problem of minimization of the potential using
only the MIB formalism.

The strongest point of the present geometric approach is that we {\em avoid}
solving these equations and nevertheless we rigorously prove several statements
about the ground state of the model. In other words, we study the properties
of the absolute minimum {\em without} solving the radial problem.
This is especially useful for the case of the {\em smallest} possible group $G$,
for which the MIB technique becomes redundant.
Thus, the geometric approach presented here is neither a particular case
nor an improvement of the MIB method, but is complementary to it.\\

The structure of the paper is the following.
In Section~\ref{section-formalism} we introduce the formalism and derive a very compact
expression for the free energy functional.
The extrema of the Landau potential cannot be found with straightforward algebra,
so in Section~\ref{section-minima} we develop geometric tools that allow us to find
the number of extrema and minima of the potential.
Section~\ref{section-weak} is devoted to the special case of a potential stable in a weak sense.
Then, in Section~\ref{section-symmetries} we give full classification of explicit symmetries
of the model and derive conditions when and how these symmetries are spontaneously broken.
All this allows us to describe in Section~\ref{section-phase-diagram} the phase diagrams
of the model, listing the phases according to the number of minima and symmetries.
Here, we also discuss phase transitions and argue that critical properties, too, can
be calculated in geometric terms.
Section~\ref{section-examples} contains analysis of several simple cases, which provide illustration
of the general approach.
In the short Section~\ref{section-solitons} we outline conditions when solitons appear in this model.
In Section~\ref{section-multi} we outline characteristic features
of the general GL model with two complex $N$-vector order parameters, and finally
in Section~\ref{section-conclusions} we draw our conclusions.
Appendices provide some mathematical details and derivations.

\section{Formalism}\label{section-formalism}

In the main part of the paper we will assume that $\psi_i(\vec r)$
are just complex numbers;
modifications in the case of more complicated OPs will be discussed
in Section~\ref{section-multi}.
Throughout the paper we also assume that the absolute values of $|\psi_i|$
are not bounded from above.

Let us consider the free-energy density
in the most general globally $U(1)$-invariant 2OP GL model
containing all possible quadratic and quartic terms in the potential:
\be
F = K + V_2 + V_4\,.\label{freeenergy}
\ee
It is a sum of the gradient term $K$,
\be
K = \kappa_1 |\vec{D} \psi_1|^2 + \kappa_2 |\vec{D} \psi_2|^2
+ \kappa_3 (\vec{D}\psi_1)^*(\vec{D} \psi_2) + \kappa_3^* (\vec{D}\psi_2)^*(\vec{D} \psi_1)
\,,\label{gradient}
\ee
where $\vec{D}$ is either $\vec{\nabla}$ or the covariant derivative,
and the Landau potential
\bea
V_2&=&- a_1|\psi_1|^2 - a_2|\psi_2|^2
- a_3 (\psi_1^*\psi_2) - a_3^{*} (\psi_2^*\psi_1)\,;\label{potential}\\[2mm]
V_4&=&
\fr{b_1}{2}|\psi_1|^4 + \fr{b_2}{2}|\psi_2|^4
+ b_3 |\psi_1|^2|\psi_2|^2 + \left[\fr{b_4}{2}(\psi_1^*\psi_2)+
b_5 |\psi_1|^2 + b_6 |\psi_2|^2\right](\psi_1^*\psi_2) +{\rm
c.c.}\nonumber
\eea
Free energy density (\ref{freeenergy}) contains $4+4+9=17$ free parameters: real
$\kappa_1,\, \kappa_2,\, a_1,\, a_2,\, b_1,\, b_2,\, b_3$ and
complex $\kappa_3,\, a_3,\, b_4,\, b_5,\, b_6$.

By construction, the free energy
remains invariant under the $U(1)$ group of simultaneous multiplication
of $\psi_1$ and $\psi_2$ by the same global phase factor.
We do not consider terms that violate this symmetry,
such as $\psi_1^2 + (\psi_1^2)^*$.

Note that potential (\ref{potential}) contains quartic terms
such as $|\psi_1|^2(\psi_1^*\psi_2)$ that mix
$\psi_1$ and $\psi_2$, which are usually absent
in many particular applications of the 2OP GL model.
However, in certain cases such terms appear,
as it happens in the dirty limit of a two-gap superconductor,
see e.g. \cite{EXtwogapGurevich}.

We stress that in our approach it is essential that
we include all possible terms from the very beginning.

\subsection{Reparametrization symmetry}\label{section-reparametrization}

From the physical point of view, the order parameters $\psi_1$ and $\psi_2$
can be of the same (as in two-gap superconductors) or of different nature
(as in the case of superfluid/crystalline ordering interplay).
However, one can always make OPs dimensionless, and
once the free energy density (\ref{freeenergy}) is constructed and the problem
of its minimization is posed, the physical nature of the OPs becomes irrelevant.

One can then view OPs $\psi_1$ and $\psi_2$ as components
of a single complex 2-vector $\Phi$:
$$
\Phi = \stolbik{\psi_1}{\psi_2}\,,
$$
and consider transformations that mix $\psi_1$ and $\psi_2$. These are assumed to be local
transformations, i.e. they mix $\psi_i(\vec r)$ taken at the same point $\vec r$.

We start with the observation that the most general free energy density (\ref{freeenergy})
retains its generic form under any regular linear transformation between $\psi_1$ and $\psi_2$.
In other words, transformation
\be
\Phi \to \Phi' = T\cdot \Phi\,,\quad \mbox{with any }T \in GL(2,C)\,,\label{transform}
\ee
again leads to (\ref{freeenergy}) but with reparametrized coefficients:
\be
\{\kappa_i,\, a_i,\, b_i\} \to \{\kappa_i',\, a_i',\, b_i'\} = \tau(\kappa_i,\, a_i,\, b_i)\,.
\label{transformtau}
\ee
The explicit link between $T$ and $\tau$ will be given below.

Since any $T\in GL(2,C)$ is invertible, so is $\tau$.
Therefore, if (\ref{transform})
is accompanied by the transform $\tau^{-1}$ of the coefficients, then one arrives at
{\em exactly the same} expression for the free energy as before.

If one considers the free energy only, then the physical observables, such as the depth
of the Landau potential at the minimum
and the eigenvalues of the second derivative matrix of the potential (the hessian)
can be expressed in terms of the coefficients $\{\kappa_i,\, a_i,\, b_i\}$ only.
Therefore, the models $(\Phi,\{\kappa_i,\, a_i,\, b_i\})$
and $(\Phi',\{\kappa_i^\prime,\, a_i^\prime,\, b_i^\prime\})$ have the same sets of observables.
In other words, reparametrization transformations do not change the physical content
of a given model; they only affect the way we look at it.
Thus, we have a {\em reparametrization freedom}
in this problem, with the reparametrization group $GL(2,C)$.

The general linear group $GL(2,C)$ is an 8-dimensional Lie group. It can be written as
\be
GL(2,C) = \mathbb{C}^*\times SL(2,C)\,,
\ee
where $\mathbb{C}^*$ is the group of all multiplications of $\Phi$ by a non-zero complex number.
Due to the $U(1)$-invariance of the free energy,
multiplication of $\Phi$ by an overall phase factor
induces the identity transformation of the coefficients,
while the 7-dimensional factorgroup $GL(2,C)/U(1)$ induces non-trivial transformations $\tau$.
Thus, the 17-dimensional space of coefficients (i.e. the space of
all possible 2OP GL models) becomes sliced into 7-dimensional regions of essentially identical
models linked by all possible $\tau$.
The space of distinct physical situations is described by the corresponding 10-dimensional factorspace.

\subsection{Orbit space}

Let us now introduce the four-vector $r^\mu = (r_0,\,r_i) = (\Phi^\dagger \sigma^\mu \Phi)$ with components
\be
r_0 = (\Phi^\dagger \Phi) = |\psi_1|^2 + |\psi_2|^2\,,\quad
r_i = (\Phi^\dagger \sigma_i \Phi) =
\stolb{2\Re (\psi^*_1 \psi_2) }{2\Im (\psi^*_1 \psi_2) }{|\psi_1|^2 - |\psi_2|^2}\,.
\label{ri}
\ee
Here, index $\mu = 0,1,2,3$ refers to the components in the internal space
and has no relation to the space-time.
Multiplying $\psi_i$ by a common phase factor does not change $r^\mu$,
so each $r^\mu$ uniquely parametrizes a single $U(1)$-orbit in the $\psi_i$-space.
The $U(1)$-invariant free energy (\ref{freeenergy})
can be also defined in this $1+3$-dimensional {\em orbit space}.

The $SL(2,C) \subset GL(2,C)$ group of transformations of $\Phi$ induces
the proper Lorentz group $SO(1,3)$ of transformations of $r^\mu$.
This group includes 3D rotations of the vector $r_i$ as well as ``boosts''
that mix $r_0$ and $r_i$, so
the orbit space gets naturally equipped with the {\em Minkowski space structure}
with metric diag$(1,\,-1,\,-1,\,-1)$.
We stress again that the words ``Minkowski space'' and ``Lorentz group'' always
refer to the internal space, not to the usual space-time.

Since the order parameters $\psi_1$ and $\psi_2$ are just complex numbers,
direct calculation shows that
\be
r^\mu r_\mu \equiv r_0^2 - r_i^2 = 0\,. \label{r2}
\ee
Then, since $r_0>0$ and since the values of $r^\mu$ are not restricted from above,
the orbit space of the 2OP GL model is given by the forward lightcone $LC^+$ in the Minkowski space.
As should be expected, the reparametrization group in the orbit space, $SO(1,3)$,
leaves the orbit space invariant.

Analogously to $r^\mu$, one can also introduce
\be
\rho^\mu \equiv (\vec{D} \Phi)^* \sigma^\mu (\vec{D} \Phi)\,.
\ee
Obviously, the reparametrization transformation laws of $\rho^\mu$ are the same as for $r^\mu$.

All this allows us to rewrite the free energy (\ref{freeenergy}) in a very compact form:
\be
F = K_\mu \rho^\mu - A_\mu r^\mu + {1\over 2} B_{\mu\nu} r^\mu r^\nu\,,\label{freeenergy2}
\ee
with
\bea
K^\mu &=& {1\over 2}\left(\kappa_1+\kappa_2,\, -2\Re \kappa_3,\,
2\Im \kappa_3,\, -\kappa_1+\kappa_2\right)\,,\nonumber\\[2mm]
A^\mu &=& {1\over 2}\left(a_1+a_2,\, -2\Re a_3,\,
2\Im a_3,\, -a_1+a_2\right)\,,\label{KAB}\\[2mm]
B^{\mu\nu} &=& {1\over 2}\left(\begin{array}{cccc}
{b_1+b_2 \over 2} + b_3 & -\Re(b_5 + b_6)
    & \Im(b_5 + b_6) & -{b_1-b_2 \over 2} \\[1mm]
-\Re(b_5 + b_6) & \Re b_4 & -\Im b_4 & \Re(b_5 - b_6) \\[1mm]
\Im(b_5 + b_6) & -\Im b_4 & -\Re b_4 & -\Im(b_5 - b_6) \\[1mm]
 -{b_1-b_2 \over 2} & \Re(b_5 - b_6) & -\Im(b_5 - b_6)
 & {b_1+b_2 \over 2} - b_3
\end{array}\right)\,.\nonumber
\eea
Note that due to (\ref{r2}), definition of the matrix $B^{\mu\nu}$ has one degree of freedom,
since $B^{\mu\nu}$ and $\tilde{B}^{\mu\nu} = B^{\mu\nu} - C g^{\mu\nu}$ with any $C$ are equivalent.

The quantities $K^\mu$, $A^\mu$ and $B^{\mu\nu}$ transform as four-vectors and
a four-tensor, respectively. This provides the explicit link between transformations
$T$ and $\tau$ mentioned in Sect.~\ref{section-reparametrization}.
For convenience, we collect in Appendix~\ref{section-app-B} some basic facts
concerning the manipulation of $B^{\mu\nu}$.

\subsection{Properties of the coefficients}\label{section-properties}

General physical requirements place restrictions on possible $K_\mu$ and $B_{\mu\nu}$.

First, the requirement that very large wavevector oscillations must increase
not decrease the free energy implies that $K_\mu$ lies inside the future lightcone:
$K_0>0$, $K_\mu K^\mu >0$. This condition remains true under
an arbitrary $SO(1,3)$ transformation.

Second, we require that the potential is bounded from below in the entire $\psi_i$-space.
Since the potential is build of quadratic and quartic terms, $V=V_2+V_4$,
this can be achieved in two cases (here we coin the terminology of \cite{nachtmann},
where the stability of the Higgs potential in 2HDM was analyzed):
\begin{itemize}
\item
the potential is stable in a {\em strong} sense, if $V_4$ increases in all directions
in the $\psi_i$-space;
\item
the potential is stable in a {\em weak} sense, if $V_4$ is non-decreasing in all directions
in the $\psi_i$-space, and $V_2$ increases along the flat directions of $V_4$.
\end{itemize}
Let us focus on the case of the potentials stable in a strong sense;
the case of the potential stable in a weak sense will be considered in Section~\ref{section-weak}.
The requirement that $V_4$ is positive definite in the entire $\psi_i$-space
means that the quadratic form $B_{\mu\nu} r^\mu r^\nu$ is positive-definite
on the future lightcone $LC^+$. In Appendix~\ref{section-app-positive} we prove
that this is equivalent to the statement that $B_{\mu\nu}$ is diagonalizable
by an $SO(1,3)$ transformation and after diagonalization it takes form
\be
B^{\mu\nu} =
\left(\begin{array}{cccc}
B_0 & 0 & 0 & 0\\
0 & -B_1 & 0 & 0\\
0 & 0 & -B_2 & 0\\
0 & 0 & 0 & -B_3 \end{array}\right)\quad \mbox{with} \quad B_0 > B_1, B_2, B_3\,.\label{Bi}
\ee
We will refer to $B_0$ as the ``timelike'' eigenvalue of $B_{\mu\nu}$
and $B_i$, $i=1,2,3$, as its ``spacelike'' eigenvalues.
The sign minus in front of the spacelike eigenvalues is the result of the Minkowski-space metric,
see Appendix~\ref{section-app-B}.
The degree of freedom in the definition of $B^{\mu\nu}$ amounts to shifting all the eigenvalues by the same
constant and does not affect the inequalities (\ref{Bi}).
However it can be used to manipulate the signs of the eigenvalues.

Finding the eigenvalues of $B^{\mu\nu}$ explicitly in terms of $b_i$
requires solution of a fourth-order characteristic equation,
which constitutes one of the computational difficulties
of straightforward algebra.
We reiterate that in our analysis we never use these explicit expressions.
Our analysis relies only on the fact that the eigenvalues
are real and satisfy (\ref{Bi}).

\section{Minima of the Landau potential}\label{section-minima}

Having introduced the formalism that allows us treat the most general 2OP GL model,
let us proceed to the task of minimization of the free energy functional.
We do not consider here the effects of non-trivial boundary conditions,
so we are looking for homogeneous solutions $\psi_i(\vec r) = \lr{\psi_i}$
that minimize the Landau potential (\ref{potential}).

As mentioned in the Introduction, straightforward algebra is of little help for
the minimization problem, since the resulting
system of coupled equations $\partial V/\partial \psi_i = 0$ cannot be solved
in the general case.
However one can still learn much about the
ground state of the general 2OP GL model without finding its location explicitly.
In this paper we will provide, in particular, answers to the following questions:
\begin{itemize}
\item
How many extrema does the potential with given parameters have? How many of them are minima?
\item
Can the global minimum be degenerate and when does it happen?
\item
When does the global minimum spontaneously break an explicit symmetry of the potential?
\item
What is the phase diagram of the model?
What phase transitions can take place during continuous change of the coefficients of the model?
\end{itemize}

\subsection{Number of extrema}\label{section-extrema}

Let us start with the number of extrema of a generic Landau potential.
In order to find an extremum of $V$ lying on the future lightcone $LC^+$, one can use the standard Lagrange multiplier
method. In this case one needs to introduce only one Lagrange multiplier $\lambda$, which leads to the following
system
\be
\left\{
\begin{array}{l}
B_{\mu\nu} \lr{r^\nu} - \lambda \lr{r_\mu} = A_\mu\,,\\
\lr{r^\mu} \lr{r_\mu} = 0\,.
\end{array}\right.\label{lagrange}
\ee
Here, $\lr{r^\mu}$ labels the position of an extremum.
To avoid cumbersome notation, we omit $\lr{\dots}$ in this subsection.

To establish how many solutions system (\ref{lagrange}) has,
consider the $B_{\mu\nu}$-diagonal frame
(we remind that for a potential stable in a strong sense such a frame always exists).
Then the first line in (\ref{lagrange}) takes form
\be
(B_0 - \lambda) r_0 = A_0\,, \qquad (B_i - \lambda) r_i = A_i\,.
\ee
Rewriting $r_i = r_0 n_i$, where $n_i$ is a unit 3-vector, and eliminating $\lambda$,
one obtains
\be
\left[A_0 - (B_0-B_i)r_0\right]n_i = A_i\,.\label{lagrange2}
\ee
These three equations are coupled via the condition $|\vec n|=1$.
Consider the l.h.s. of (\ref{lagrange2}) at fixed $r_0$ and all the unit vectors $n_i$.
It parametrizes an ellipsoid with semiaxes
\be
A_0 - (B_0-B_1)r_0\,,\quad A_0 - (B_0-B_2)r_0\,,\quad A_0 - (B_0-B_3)r_0\,.\label{semiaxes}
\ee
Now imagine how this ellipsoid changes if $r_0$ increases from zero to infinity.
Let us for simplicity assume that the eigenvalues of $B^{\mu\nu}$ are distinct and $B_1<B_2<B_3$.

Assume first that $A_0>0$.
Then, at $r_0 = 0$, Eq.~(\ref{lagrange2}) parametrizes a sphere with radius $A_0$.
As $r_0$ increases, it turns into a continuously shrinking ellipsoid with semiaxes (\ref{semiaxes}).
At
$$
r_0= r_0^{(1)} \equiv {A_0 \over B_0-B_1}
$$
this ellipsoid collapses to the interior of a planar ellipse with semiaxes
$$
A_0{B_2-B_1 \over B_0-B_1}\,,\quad A_0{B_3-B_1 \over B_0-B_1}\,,
$$
orthogonal to the first axis.
As $r_0$ increases further, this ellipse returns to an ellipsoid with two shrinking and one growing
semiaxes, and at
$$
r_0= r_0^{(2)} \equiv {A_0 \over B_0-B_2}
$$
is collapses again to a flat ellipse with semiaxes
$$
A_0{|B_1-B_2| \over B_0-B_2}\,,\quad A_0{B_3-B_2 \over B_0-B_2}\,,
$$
orthogonal to the second axis. Further on, at $r_0=r_0^{(3)}$ it collapses
to an ellipse orthogonal to the third axis, and for even larger values of $r_0$
this ellipsoid increases infinitely.

\begin{figure}[!htb]
   \centering
\includegraphics[width=6cm]{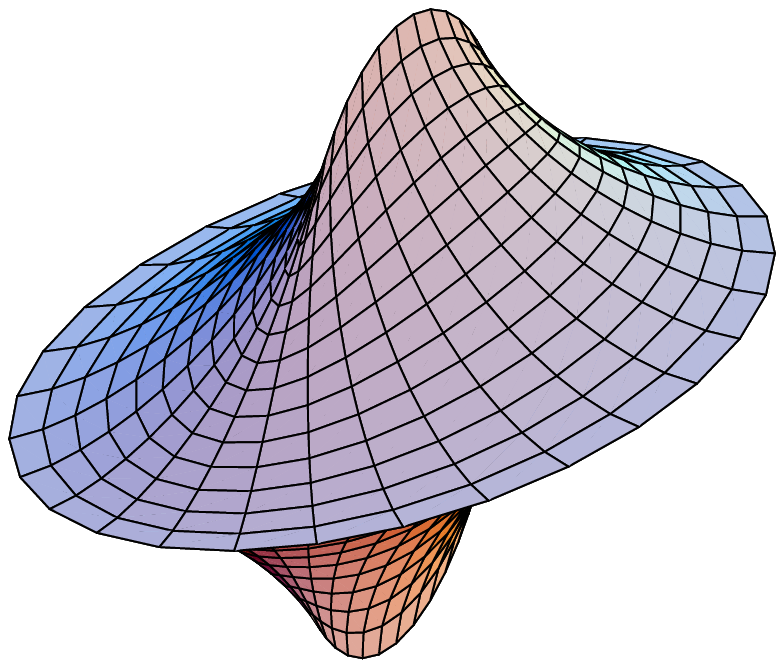}
\includegraphics[width=6cm]{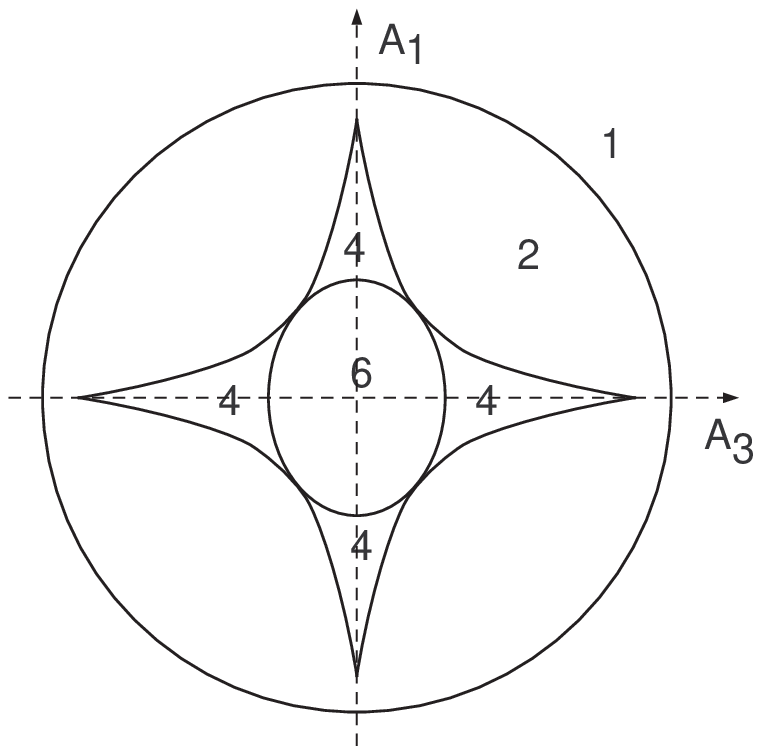}
\caption{(Color online) Left: The envelope of ellipsoids for $r_0^{(1)}<r_0<r_0^{(2)}$.
Right: $(A_1,A_3)$-section of the caustic surfaces in the $A_i$-space
and of the sphere with radius $A_0$. The number of solutions of Eq.~(\ref{lagrange2})
is indicated for each region.}
   \label{fig-caustic-region}
\end{figure}

For each $r_0$ interval, the ellipsoid sweeps a certain region in the three-dimensional space.
\begin{itemize}
\item
During the first stage, $0<r_0<r_0^{(1)}$, it sweeps the interior of the sphere of radius $A_0$,
passing through each point exactly once.
\item
during the second stage, $r_0^{(1)}<r_0<r_0^{(2)}$, it sweeps a certain region, bounded
by the caustic surface shown in Fig.~\ref{fig-caustic-region}, left.
It can be shown that each point inside this region is swept exactly {\em twice}.
\item
during the third stage, $r_0^{(2)}<r_0<r_0^{(3)}$, it sweeps twice
a similar caustic region, but oriented differently;
\item
Finally, during the fourth stage, $r_0 >r_0^{(3)}$, it sweeps once the entire 3D-space.
\end{itemize}
Note that the appearance of caustic regions in the potential extremization problem is natural,
since this problem is known to exhibit some catastrophe theory phenomena, see e.g. \cite{GL}.

Returning to the system (\ref{lagrange2}), which equates the l.h.s. to the 3-vector $A_i$, one sees
that in order get the number of solutions of (\ref{lagrange2}) without finding them explicitly,
one simply has to check whether $A_i$ falls inside these regions.
Fig.~\ref{fig-caustic-region}, right, illustrates this statement. It shows regions with
different numbers of extrema on the $(A_1,A_3)$ plane for $A_2=0$ and some $A_0>0$.

For the second possibility, $A_0<0$, the situation is much simpler.
At $r_0 = 0$ we again start with the sphere of radius $|A_0|$.
As $r_0$ grows, it turns into an ellipsoid with growing semiaxes
(which is due to $B_0-B_i>0$), and it sweeps once the entire space outside the sphere.

The size of the above 3D regions is proportional to $A_0$.
Therefore, in the $1+3$-dimensional space of four-vectors $A^\mu$, they define
the corresponding conical regions starting from the origin.
Therefore, the number of extrema of the potential depends on where the four-vector $A^\mu$ lies:
\begin{itemize}
\item
If $A^\mu$ lies inside the {\em past lightcone} $LC^-$ (i.e. $A_0<0$ and $|\vec{A}|<|A_0|$),
then system (\ref{lagrange}) has {\em no solution}.
In this case the quadratic term of the potential, $-A_\mu r^\mu$,
increases in all directions in the $\psi_i$-space.
The only extremum of the potential is the global minimum at the origin,
which corresponds to the high-symmetry ground state of the model.
\item
If $A^\mu$ lies outside $LC^-$, then {\em at least one} non-trivial solution exists.
If $A_0<0$ in the $B^{\mu\nu}$-diagonal basis,
i.e. $A^\mu$ still lies in the lower hemispace,
then this solution is unique and is the global minimum of the potential.
\item
If $A^\mu$ lies inside $LC^+$, then {\em at least two} non-trivial extrema exist.
\item
If $A^\mu$ lies inside one or both caustic cones defined above,
then {\em two additional extrema per cone} appear.
\end{itemize}
In total, there can be up to six non-trivial extrema of the potential in the orbit space.
This result was also found independently in \cite{nachtmann} with a more traditional analysis
of the Higgs potential of 2HDM.
The largest number of extrema is realized in situations when $A_0>0$ and $A_i$ is sufficiently small,
so that $A^\mu$ lies inside both caustic cones.

Special care must be taken when $r_0$ of an extremum is exactly equal to one of the values $r_0^{(i)}$.
Then the sequence of intersections of the ellipsoid with a given point $A_i$ changes, but
the overall counting rules given above remain the same. As we will see later, this situation
corresponds to spontaneous violation of a discrete symmetry.

\subsection{Number of local minima}\label{section-local-minima}

In general, the above construction cannot distinguish a local minimum from a saddle point
or a maximum, so other methods must be used to establish the number of local minima.

First of all, let us note that potential (\ref{potential}) with restrictions (\ref{Bi}) cannot have non-trivial
maxima \cite{michel,mink}.
This can be easily seen by drawing any ray in the $\psi_i$-space from the origin
and observing that the potential along this ray can be written as
$\alpha |\psi|^2 +\beta |\psi|^4$ with $\beta>0$.
This function can never have a non-trivial maximum. Thus, the problem reduces to distinguishing minima
from saddle points (in the orbit space).

Take a generic extremum of the potential in the $\psi_1$, $\psi_2$
space, and calculate the second derivative matrix of the potential
(the hessian) at this point
\be
(\Omega^2)_{\alpha\beta} =
{\partial^2 V \over \partial \phi_\alpha \partial
\phi_\beta}\,.\label{omega}
\ee
Here, $\phi_\alpha$ are the four
real degrees of freedom, real and imaginary parts of $\psi_1$ and
$\psi_2$:
\be
\psi_1 = \phi_1 + i \phi_2\,; \quad \psi_2 = \phi_3 + i \phi_4\,.\label{phialpha}
\ee
 We will refer to the eigenvalues of this matrix as
``eigenfrequencies'', $\omega^2_a$. Due to the $U(1)$-invariance
of the potential, it always has one flat direction with zero
eigenfrequency (one Goldstone mode), while among the other three,
there is at least one positive eigenfrequency. Let us call the
signs of these three eigenfrequencies (i.e. $+++$, $++-$, or
$+--$) the {\em signature} of the hessian.

Among the four degrees of freedom in the $\psi_i$-space, three correspond to variations in the orbit space,
i.e. to shifts of the point $r^\mu$ on $LC^+$ away from the extremum.
If the extremum is not at the origin, then these shifts are linear functions of the
shifts in the $\psi_i$-space, and the Jacobian corresponding to this transformation is regular.
Indeed, with the notation (\ref{phialpha}), one gets:
\be
{1 \over 2} {\partial r^\mu \over \partial \phi_\alpha} =
\left(\begin{array}{cccc}
\phi_1 & \phi_2 & \phi_3 & \phi_4 \\
\phi_3 & \phi_4 & \phi_1 & \phi_2 \\
\phi_4 & -\phi_3 & -\phi_2 & \phi_1 \\
\phi_1 & \phi_2 & - \phi_3 & - \phi_4
\end{array}
\right)\,.\label{jacobian}
\ee
If $\Phi_1$ is non-zero, then this matrix has one and only one zero eigenvalue,
with the corresponding right eigenvector $(-\phi_2,\, \phi_1,\, -\phi_4,\, \phi_3)$ being the Goldstone
mode. This can be seen most easily in the frame where $r^\mu \propto (1,0,0,1)$,
implying $\phi_3=\phi_4=0$ (obviously, such a frame always exists for any $r^\mu$).
The signature of the hessian, therefore, is the same in the $\psi_i$-space and
in the orbit space.

For an extremum to be minimum, its signature must be $+++$.
However, since the explicit expressions for $\lr{r^\mu}$ cannot be given
in the general case, checking this explicitly at each extremum is also problematic.
One can circumvent this computational difficulty using the following Proposition:
\begin{proposition}\label{prop-zones}
For each extremum, the hessian remains signature-definite
within each conical region described in the previous subsection.
\end{proposition}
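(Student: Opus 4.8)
\section*{Proof proposal}

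The plan is to prove the Proposition by a continuity argument: I will show that, as the four-vector $A^\mu$ moves within a single conical region, each extremum varies smoothly and none of its three non-Goldstone eigenfrequencies ever passes through zero, so the signature cannot change. Since it has already been established (via the Jacobian \ref{jacobian}) that the signature in the $\psi_i$-space equals the signature of the second variation of $V$ restricted to $LC^+$, I would work entirely in the orbit space. Parametrizing $LC^+$ by $\vec r$ with $r_0=|\vec r|$ and using the extremum relation $A_0=(B_0-\lambda)r_0$, the restricted Hessian at an extremum $r^\mu=r_0(1,\vec n)$ in the $B$-diagonal frame takes the compact rank-one-plus-diagonal form $H_{ij}=(\lambda-B_i)\delta_{ij}+(B_0-\lambda)\,n_i n_j$, where the curvature of the lightcone has already been absorbed. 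Its radial direction $\vec n$ always gives $B_0-\sum_i B_i n_i^2>0$, which reproduces the guaranteed positive eigenfrequency and the allowed signatures $+++,\ ++-,\ +--$.

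The heart of the argument is to tie the degeneracy of $H$ to the merging of extrema. The matrix-determinant lemma gives $\det H=\prod_i(\lambda-B_i)\cdot\bigl(1+(B_0-\lambda)\sum_i n_i^2/(\lambda-B_i)\bigr)$, and a short calculation shows that the bracket is proportional to $f'(\lambda)$, where $f(\lambda)=(A_0/(B_0-\lambda))^2-\sum_i(A_i/(B_i-\lambda))^2$ is exactly the function whose zeros locate the extrema, obtained from the lightcone constraint in \ref{lagrange} after eliminating $r_0$ and $r_i$. Hence, away from the special values $\lambda=B_i$, a non-Goldstone eigenfrequency vanishes precisely when $f(\lambda)=f'(\lambda)=0$, i.e. when $\lambda$ is a multiple root of the extremum equation and two extrema coalesce. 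By the envelope construction of the previous subsection this coalescence is nothing but tangency of the sweeping ellipsoid, so it occurs exactly on the caustic surfaces (Fig.\ref{fig-caustic-region}) together with the light-cone boundaries, which are the boundaries of the conical regions.

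Combining the two steps, in the interior of a conical region every extremum has $\det H\neq0$. The eigenfrequencies are continuous functions of $A^\mu$ (roots of the characteristic polynomial of a continuously varying matrix), the extrema persist and vary smoothly by the implicit function theorem (their non-degeneracy being exactly $\det H\neq0$), and coercivity of the quartic potential keeps them bounded along any path. Since no eigenfrequency can cross zero in the interior and each conical region is connected, the signature of each extremum stays constant throughout the region.

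I expect the main obstacle to be not the determinant computation itself but the careful treatment of the configurations the clean formula does not cover, together with the tight matching of the two notions of degeneracy. Specifically, one must handle coincident spacelike eigenvalues $B_i$, vanishing components $A_i$ (which push $\lambda\to B_i$ and make $n_i$ ill-defined), and extrema lying exactly on an axis at $r_0=r_0^{(i)}$ — the last being the case the text links to spontaneous breaking of a discrete symmetry. One must verify that all such loci sit on the region boundaries and that the identification ``multiple root of $f$'' $\Leftrightarrow$ ``caustic / region boundary'' is exact, so that no hidden degeneracy lurks in a region's interior where the signature could silently flip. Establishing this correspondence is where the real work lies.
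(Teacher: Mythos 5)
Your proposal is correct and follows the same overall skeleton as the paper's proof --- the eigenfrequencies depend continuously (algebraically) on $A^\mu$, so a signature change forces one of them through zero, which forces a degenerate extremum, which can only occur where extrema merge, i.e.\ on a caustic surface bounding the region --- but you implement the crucial last link in a genuinely different and in fact sharper way. The paper argues qualitatively that two merging extrema have nearly equal $r_0$, so the corresponding ``successive'' ellipsoids intersect near their envelope; the paper itself flags this as ``loosely speaking.'' You instead compute the restricted Hessian $H_{ij}=(\lambda-B_i)\delta_{ij}+(B_0-\lambda)n_in_j$ on $LC^+$ (which is correct: the lightcone curvature term $-A_0(\delta_{ij}-n_in_j)/r_0$ combines with the extremum condition $A_0=(B_0-\lambda)r_0$ to give exactly this) and obtain, via the matrix determinant lemma,
$$
\det H \;=\; \prod_i(\lambda-B_i)\cdot\frac{(B_0-\lambda)^3}{2A_0^2}\,f'(\lambda)\,,\qquad
f(\lambda)=\frac{A_0^2}{(B_0-\lambda)^2}-\sum_i\frac{A_i^2}{(B_i-\lambda)^2}\,,
$$
so that degeneracy of the Hessian at a root of $f$ is \emph{literally} the condition for a double root of the secular equation, i.e.\ for $A_\mu$ to lie on the discriminant locus, which is the caustic. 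This turns the paper's envelope heuristic into an identity; it is the main thing your route buys, and it makes the weakest step of the published argument rigorous.

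One caveat on your closing discussion: the worry is well placed, but the sub-claim that ``all such loci sit on the region boundaries'' is not what needs to be (nor can be) verified. The hyperplanes $A_i=0$, where $\lambda\to B_i$ and the product formula degenerates to $0\cdot\infty$, cut through the \emph{interior} of the regions (e.g.\ $A^\mu=(A_0,0,0,0)$ sits in the innermost one). What saves the argument there is a direct evaluation rather than the clean formula: for the symmetric pair at $\lambda=B_3$, $A_3=0$, $n_3\neq0$, one has $H=\mathrm{diag}(B_3-B_1,\,B_3-B_2,\,0)+(B_0-B_3)\,n_in_j$ and hence $\det H=(B_3-B_1)(B_3-B_2)(B_0-B_3)\,n_3^2\neq0$ for distinct $B_i$, so no eigenfrequency vanishes and the signature survives the crossing of $A_3=0$; this is precisely the ``special care'' situation the paper associates with spontaneous $Z_2$ breaking. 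With that check added (and the analogous bookkeeping for coincident $B_i$, where the extra zero modes are rotational Goldstone-like directions to be discounted rather than genuine degeneracies), your argument closes.
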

\begin{proof}
Let us fix $B^{\mu\nu}$ and move $A^\mu$ continuously in the parameter space,
keeping it strictly inside one of the regions described in the previous subsection.
Let us pick up an extremum and follow how it changes when $A^\mu$ moves.
Its position, its depth as well as the eigenfrequencies are algebraic functions of
the components of $A^\mu$ and therefore they also change continuously.
So, if the hessian changes signature along at the endpoints of some $A^\mu$ path,
then there exists a point, at which one of the eigenfrequencies is zero.
Thus, the expansion of the potential at this point starts from the third or fourth
order term, and this points corresponds to merging of two or three simple extrema,
respectively.

But such a merging cannot happen for $A_\mu$ lying strictly inside the caustic regions.
Indeed, if two sufficiently close points $r_a^\mu$ and $r_b^\mu$ are both extrema of the potential,
then their respective zeroth components $r_{a0}$ and $r_{b0}$ are also close,
so, the intersection points of the corresponding ellipsoids lie close to the
boundary of a caustic region.
In the limit $r_a^\mu \to r_b^\mu$, the intersection points, $A_\mu$ being among them,
approach the envelope (loosely speaking, the envelope can be viewed
as the locus of intersections of the ``successive'' ellipsoids).
\end{proof}
So, since the signature of the hessian remains the same for all $A_\mu$
inside some region, one can select some representative $A_\mu$, calculate
the signature of the hessian for it, and then extrapolate the results for all the
points inside this region.

Let us now calculate the number of minima inside the innermost region of the $A^\mu$ space,
see Fig.~\ref{fig-caustic-region}, right.
For the representative point in this region, $A^\mu = (A_0,\,0,\,0,\,0)$,
calculations can be easily done explicitly.
Indeed, it follows from (\ref{lagrange2}) that there are three pairs of extrema
at $r_0 = r_0^{(1)}$, $r_0^{(2)}$, and $r_0^{(3)}$:
\be
r_0^{(1)}(1,\,\pm 1,\,0,\,0)\,,\quad
r_0^{(2)}(1,\,0,\,\pm 1,\,0)\,,\quad
r_0^{(3)}(1,\,0,\,0,\,\pm 1)\,.\label{threepairs}
\ee
Again, let us order the eigenvalues $B_i$, $B_1<B_2<B_3$ and
expand the potential near the point $\lr{r^\mu}_+ = r_0^{(3)}(1,\,0,\,0,\,1)$.
If $r^\mu = r_0 (1,\,\sin\theta\cos\phi,\,\sin\theta\sin\phi,\,\cos\theta) = \lr{r^\mu}_+ + \delta r^\mu$,
then
\bea
V &=& - A_0 r_0 + {1 \over 2} r_0^2\left(B_0 - B_1 \sin^2\theta\cos^2\phi - B_2 \sin^2\theta\sin^2\phi
- B_3 \cos^2\theta\right) \nonumber\\
&\approx & - {A_0^2 \over 2(B_0 - B_3)} + {B_0-B_3 \over 2}(\delta r_0)^2 +
{A_0^2 \theta^2 \over 2}{(B_3-B_1)\cos^2\phi + (B_3-B_2)\sin^2\phi\over (B_0 - B_3)^2}\,.
\label{expansion}
\eea
Here, $\delta r_0$ and $\theta$ are small, while $\phi$ can be arbitrary.
Since $B_3$ is the largest spacelike eigenvalue, this point is a minimum,
and so is the other extremum of this pair, $\lr{r^\mu}_- = r_0^{(3)}(1,\,0,\,0,\,-1)$.
The same calculation for the extrema at $r_0 = r_0^{(1)}$ and $r_0^{(2)}$,
shows that they are saddle points.
Thus, we find that for $A_\mu$ lying in the innermost region, the potential
has two separate minima and four separate saddle points in the orbit space.

As $A_\mu$ moves out of this region, the number of minima does not increase.
Indeed, one can show that crossing the caustic surface at a generic point
leads to disappearance of two saddle points or of one saddle point and one minimum,
but it cannot, for example, lead to disappearance of three saddle points and appearance
of a new minimum. This can be also verified with the straightforward calculations
similar to (\ref{expansion}) by selecting points $A_\mu$ lying on the axes
(for this choice, all the extrema can be also studied with explicit algebra).
Therefore, we arrive at the following Proposition:
\begin{proposition}\label{prop-two-minima}
The most general quadratic plus quartic potential with two order parameters
can have at most two distinct local minima in the orbit space.
\end{proposition}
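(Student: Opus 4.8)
The plan is to combine Proposition~\ref{prop-zones} with the explicit extremum count already performed in the innermost conical region, and then to control the bifurcations that occur at the caustic surfaces. First I would invoke Proposition~\ref{prop-zones} to reduce the claim to a finite check: since the signature of each extremum's hessian is locally constant throughout every conical region of $A^\mu$-space, the number of minima is a locally constant integer that can change only when $A^\mu$ crosses a caustic surface. It therefore suffices to determine the number of minima in the region with the largest number of extrema — the innermost one, where $A^\mu$ lies inside both caustic cones with $A_0>0$ — and then to show that leaving this region cannot raise the count. The remaining regions, namely those with $A_0<0$ where at most one nontrivial extremum exists, carry at most one minimum and satisfy the bound a fortiori.

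For the innermost region I would simply reuse the explicit computation at the representative point $A^\mu=(A_0,0,0,0)$, where the six extrema form the three antipodal pairs (\ref{threepairs}). The expansion (\ref{expansion}) shows that only the pair aligned with the largest spacelike eigenvalue $B_3$ has signature $+\,+\,+$, so exactly two of the six extrema are minima and the remaining four are saddle points. By Proposition~\ref{prop-zones} this verdict extends to the whole innermost region, which thus contains precisely two local minima.

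The substantive step is to show that moving $A^\mu$ out of the innermost region across a caustic surface cannot increase the number of minima. At a generic caustic point two neighbouring extrema merge in a fold (saddle-node) event and annihilate, so exactly one eigenfrequency passes through zero at the crossing. Because only a single eigenfrequency changes sign, the two colliding extrema must have hessians whose signatures differ in exactly one slot; hence the annihilating pair is either two saddles, leaving the minimum count unchanged, or one saddle together with one minimum, lowering it by one. In particular two minima can never collide, and an annihilation can never manufacture a minimum. Reading the same bifurcations in reverse, entering a caustic cone creates an extremum pair of the same two types, so the minimum count is non-decreasing toward the innermost region. Since every region of $A^\mu$-space is joined to the innermost one by a path crossing caustic surfaces, and the innermost region already attains the value two, no region can exceed it, which is the assertion.

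The main obstacle is this bifurcation step: one must be sure that a caustic crossing is generically a simple fold in which precisely two extrema merge and precisely one eigenfrequency vanishes, excluding the coincidental collision of three or more extrema (which happens only along the lower-dimensional cusp edges of the caustic) and the degenerate crossing at which two eigenfrequencies vanish simultaneously (a codimension-two event that a generic path avoids). I would establish this from the envelope picture of Section~\ref{section-extrema}: the merging of extrema corresponds to the coalescence of intersection points of successive ellipsoids on the envelope, which is a smooth fold away from the cusp locus depicted in Fig.~\ref{fig-caustic-region}. As a concrete cross-check — following the remark after (\ref{expansion}) — I would also verify the signature bookkeeping directly by placing $A^\mu$ on the coordinate axes, where all extrema and their eigenfrequencies can be found in closed form, confirming that each caustic crossing removes either two saddles or one saddle and one minimum, and never three saddles in exchange for a new minimum.
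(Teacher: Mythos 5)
Your proposal is correct and follows essentially the same route as the paper: Proposition~\ref{prop-zones} plus the explicit count of two minima and four saddles at the representative point $A^\mu=(A_0,\,0,\,0,\,0)$ of the innermost region, followed by the observation that a generic caustic crossing annihilates either two saddles or a saddle--minimum pair and so can never increase the number of minima. Your fold-bifurcation justification of that last step (exactly one eigenfrequency vanishes at a generic crossing, so the two colliding signatures differ in exactly one slot) is merely a more explicit rendering of what the paper asserts and cross-checks by placing $A^\mu$ on the coordinate axes.
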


\subsection{The principal caustic cone}\label{section-principal}

\begin{figure}[!htb]
   \centering
\includegraphics[width=6cm]{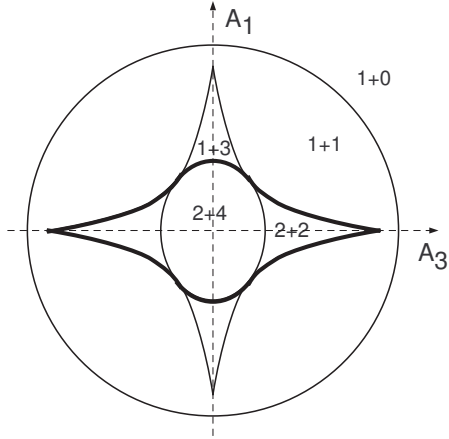}
\caption{The same as in Fig.~\ref{fig-caustic-region}, right,
but with number of minima and saddle points shown separately
as $N_{minima}+N_{saddle}$. Thick lines show the section of the principal caustic
cone.}
   \label{fig-zones2}
\end{figure}

In Section~\ref{section-extrema} we showed that for a generic potential
there exist two caustic cones in the $A^\mu$ space. If $B_1<B_2<B_3$, they corresponds to
$r_0^{(1)} \le r_0 \le r_0^{(2)}$ and $r_0^{(2)} \le r_0 \le r_0^{(3)}$, respectively.
The analysis of Section~\ref{section-local-minima}
shows that these two caustic cones play different role.
It is the second cone, with $r_0^{(2)} \le r_0 \le r_0^{(3)}$, which we call
the {\em principal caustic cone}, that separates regions
with different number of {\em minima}. This is illustrated by Fig.~\ref{fig-zones2}
which is an updated version of Fig.~\ref{fig-caustic-region}, right,
with the numbers of minima and saddle points
shown separately.
The most straightforward proof is based on the stability analysis of the extrema
in three situations with $A_\mu$ lying on each of the three axes
and Proposition~\ref{prop-zones}.
The other cone just separates regions with different numbers of saddle points
and does not affect directly the properties of the global minimum.

\subsection{Geometric reformulation of the search for the global minimum}\label{section-geometric-reform}

Consider again the potential term in (\ref{freeenergy2}):
\be
V = - A_\mu r^\mu + {1\over 2} B_{\mu\nu} r^\mu r^\nu\,.\label{freeenergy3}
\ee
Let us exploit the freedom in definition of $B_{\mu\nu}$ to make
$B_0>0$ and all $B_i<0$. Then $(B^{-1})_{\mu\nu}$ exists, and (\ref{freeenergy3}) can be rewritten as
\be
V = {1\over 2}B_{\mu\nu} (r^\mu-a^\mu) (r^\nu-a^\nu) + V_0\,,\quad
a_\mu = (B^{-1})_{\mu\nu}A^\nu\,,\quad
V_0 = -{1\over 2}(B^{-1})_{\mu\nu}A^\mu A^\nu\,.
\label{freeenergy4}
\ee
Let us now define an {\em equipotential surface} ${\cal M}^C$ as a set of all vectors $p^\mu$ in the Minkowski space ${\cal M}$
such that
\be
B_{\mu\nu} p^\mu p^\nu = B_0 p_0^2 + \sum_i |B_i| p_i^2 = C\,.\label{equipotential}
\ee
One sees that equipotential surfaces exist for $C \ge 0$ and are 3-ellipsoids nested into each other,
with their eigenaxes aligned in the $B^{\mu\nu}$-diagonal frame along the eigenaxes of $B_{\mu\nu}$.

Returning to the potential (\ref{freeenergy4}) we see that $C$ is related
to the values of the potential: $C = 2(V-V_0)$.
Therefore finding points in the orbit space
with the same value of $V$ amounts to finding intersections of the corresponding
${\cal M}^C$ with the forward lightcone $LC^+$.
In particular, to find a local minimum of the potential in the orbit space,
one has to find an equipotential surface that touches $LC^+$
(we say that two surfaces ``touch'' if they have parallel normals at the intersection points).
The {\em global} minimum corresponds to the unique equipotential surface
${\cal M}^{C_\mathrm{min}}$ that only touches but never intersects $LC^+$.

\begin{figure}[!htb]
   \centering
\includegraphics[width=6cm]{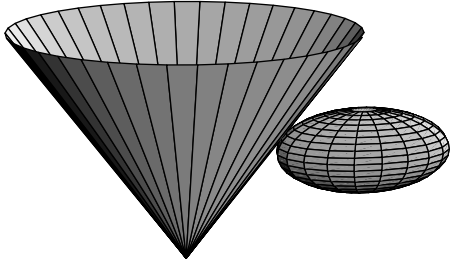}
\caption{A $1+2$-dimensional illustration of the contact between ${\cal M}^{C_\mathrm{min}}$
and $LC^+$. Shown is the case of $a^\mu$ lying outside $LC^+$.}
   \label{fig-contact}
\end{figure}

Thus, the geometric strategy for the minimization of the potential is the following:
\begin{itemize}
\item
Construct a family of 3-ellipsoids ${\cal M}^C$ at the base point $a^\mu$;
\item
Find the unique 3-ellipsoid ${\cal M}^{C_\mathrm{min}}$ that merely touches $LC^+$ but never intersects
it;
\item
The contact point or points give the values of $r^\mu$ (hence, $\psi_1$ and $\psi_2$) that minimize
the potential.
\end{itemize}
To facilitate the visualization, Fig.~\ref{fig-contact} shows a $1+2$-dimensional analogue
of the contact between ${\cal M}^{C_\mathrm{min}}$ and $LC^+$.
In this particular example, $a^\mu$, which is located at the center of the ellipsoid, lies outside $LC^+$.

Alternatively, using (\ref{equipotential}) one can interpret the potential as
the distance squared from the point $a^\mu$ in the Euclidean metric diag$(B_0,\,|B_1|,\,|B_2|,\,|B_3|)$.
The minimization problem is then reformulated as
a search for points on $LC^+$ that are closest to $a^\mu$ in this metric.
Since the forward lightcone $LC^+$ (together with its interior) is a concave region,
this representation immediate leads to the following conclusion:
the necessary condition for the existence of a degenerate minimum is
that $a^\mu$ lies inside $LC^+$: $a^\mu a_\mu > 0$. Later, in Section~\ref{section-spont-viol} we will
give necessary and sufficient conditions for this to happen.

The two geometric constructions described above, the ones based on the equipotential
surfaces and on the caustic cones,
are related to each other in the same manner as a planar curve to its evolute.
To illustrate this relation, let us consider a simple planar problem:
find on the unit circle points of local minima of the ``potential''
$V = B_{ij} (n_i-a_i) (n_j-a_j)$ with $B_{ij} = \mbox{diag}(B_1,\,B_2)$, $B_1\not = B_2$, $B_i>0$.
In the coordinates $\tilde{n}_i = (\sqrt{B})_{ij} n_j$, the ``potential''
becomes $V = |\tilde{n}_i-\tilde{a}_i|^2$, while the unit circle is transformed into an ellipse.
One can easily verify that the number of local minima depends on whether point $\tilde{a}^i$
lies inside the evolute of this ellipse.

\section{Potentials stable in a weak sense}\label{section-weak}

Let us now discuss how the above constructions change for a potential stable in a weak sense,
i.e. a potential whose $V_4$ can have flat directions in the $\psi_i$-space,
along which the potential is stabilized by the $V_2$ term.

A flat direction of the quartic part of the potential in the $\psi_i$-space corresponds
to a vector $r^\mu \in LC^+$ in the orbit space
such that $B_{\mu\nu}r^\mu r^\nu = 0$.
Such an $r^\mu$ must be an eigenvector of $B^{\mu\nu}$, see Appendix~\ref{section-app-positive}.
Let us first assume that there is only one such direction.
Aligning it with the first axis, one can diagonalize $B^{\mu\nu}$ in the ``transverse space'',
bringing it to the following generic form:
\be
B^{\mu\nu} =
\left(\begin{array}{cccc}
B_0 + \delta B & \delta B & 0 & 0\\
\delta B & -B_0 + \delta B & 0 & 0\\
0 & 0 & -B_2 & 0\\
0 & 0 & 0 & -B_3 \end{array}\right)\,,\quad \mbox{with} \quad
\delta B \ge 0\,,\ B_0 > B_2,\,B_3\,.\label{Bi2}
\ee
Note that in contrast to the potentials stable in a strong sense, (\ref{Bi}), this $B^{\mu\nu}$
cannot be diagonalized by an $SO(1,3)$ transformation.
Indeed, a boost along the first axis with ``rapidity'' $\eta$ leads to the same
$B^{\mu\nu}$ as (\ref{Bi2}) but with redefined $\delta B \to e^{-2\eta} \delta B$
(see Appendix~\ref{section-app-B}).
If $\delta B \not = 0$, then $B^{\mu\nu}$ is not diagonalizable.

To find the number of extrema in this case, one can again start with system (\ref{lagrange})
but instead of considering fixed $A_0$ sections in the $A_\mu$ space one
can fix one of its lightcone components.
Let us introduce the lightcone decomposition of any four-vector:
\be
p^\mu = p_+ n_+^\mu + p_- n_-^\mu + p_\perp^\mu\,,\quad n_\pm^\mu = (1,\,\pm 1,\, 0,\, 0)\,,
\ee
where $p_\perp^\mu = (0,\,0,\,p_2,\,p_3)$.
The lightcone coordinates $p_\pm$ are related to the zeroth and first coordinates $p_0$, $p_1$
as $p_\pm = (p_0\pm p_1)/2$. Then, system (\ref{lagrange}) can be rewritten as
\bea
\left\{
\begin{array}{l}
(B_0 - \lambda)r_- = A_-\,,\\
(B_0 - \lambda)r_+ + 2\delta B\,r_- = A_+\,,\\
(B_i - \lambda)r_i = A_i\,,\quad i = 2,3\,,\\
4 r_+ r_- = r_2^2 + r_3^2\,.
\end{array}\right.\label{lagrange3}
\eea
The condition that the quadratic part of the potential $V_2 = - A_\mu r^\mu$ increases
along the flat direction of $V_4$ given by $n_+^\mu$ implies that $A_- <0$.
Repeating the geometric analysis described in full detail for the potential stable in a strong sense,
we arrive at the following conclusion:
\begin{itemize}
\item
If $A^\mu$ lies inside the past lightcone $LC^-$, then system (\ref{lagrange3}) has no solution.
The global minimum is at the origin.
\item
If $A^\mu$ lies outside $LC^-$ (but still with $A_- <0$), then system (\ref{lagrange3}) has
a unique solution.
The corresponding unique extremum of the potential is the global minimum.
\end{itemize}

Suppose now that there are more than one flat direction of $V_4$.
Let us pick up two such distinct vectors $r_1^\mu,\, r_2^\mu \in LC^+$,
both eigenvectors of $B_{\mu\nu}$:
$$
B_{\mu\nu} r_1^\nu = \lambda_1 r_{1\mu}\,,\quad
B_{\mu\nu} r_2^\nu = \lambda_2 r_{2\mu}\,.
$$
Since $r_1^\mu r_{2\mu} \not = 0$, one obtains $\lambda_1 = \lambda_2$.
Then, using the freedom in definition of $B_{\mu\nu}$, one can always set this
common eigenvalue to zero. Then for any linear combination of these two vectors one gets
\be
B_{\mu\nu} (\alpha r_1 + \beta r_2)^\nu = 0\,. \label{lincomb}
\ee
Consider now such an $SO(1,3)$ transformation that makes $r_1^\mu \propto n_+^\mu$
and $r_2^\mu \propto n_-^\mu$. Then, $B_{\mu\nu}$ that satisfies (\ref{lincomb})
takes the following generic form:
\be
B^{\mu\nu} =
\left(\begin{array}{cccc}
0 & 0 & 0 & 0\\
0 & 0 & 0 & 0\\
0 & 0 & -B_2 & 0\\
0 & 0 & 0 & -B_3 \end{array}\right)\,,\quad \mbox{with} \quad
0 \ge B_2,\,B_3\,.\label{Bi3}
\ee
This form is diagonal (note also that it is equivalent to (\ref{Bi2}) with $\delta B =0$),
so one can again switch to the fixed $A_0$ sections.
Since $-A_\mu r^\mu$ must increase along $n^\mu_\pm$,
$A_0 < 0$. Repeating the same analysis as in Section~\ref{section-extrema}, one finds that
the potential has a unique non-trivial extremum
if $|A_1|<|A_0|$ and $A^\mu A_\mu <0$.

So, a potential stable in a weak sense is similar to the
potentials stable in a strong sense with $A_0<0$ in the frame with diagonal $B_{\mu\nu}$.
It can have no more than one non-trivial extremum,
which is then necessarily the global minimum.

\section{Symmetries and their violation}\label{section-symmetries}

\subsection{Explicit symmetries}

As explained in Section~\ref{section-reparametrization}, the free energy
remains invariant under an appropriate simultaneous transformation of the order parameters
$\psi_i$ {\em and} the coefficients. It can happen, however, that
the free energy is invariant under some specific transformation
of $\psi_i$ (or the coefficients) {\em alone}. We call this symmetry
an {\em explicit symmetry} of the free energy.

In the orbit space, this symmetry corresponds to such a map of the Minkowski space
${\cal M}$ that leaves invariant, separately,
$B_{\mu\nu} r^\mu r^\nu$, $A_\mu r^\mu$ and $K_\mu \rho^\mu$.
The notion of explicit symmetry is invariant
under the Lorentz group of the orbit space transformations;
so, any $SO(1,3)$ transformation leaves a given model in the same
symmetry class.

In simple cases the presence of a symmetry can be evident
from a direct inspection of the free energy functional, see e.g. examples
in Section~\ref{section-examples}.
In fact, in many concrete applications, the Landau potential is even constructed
in such a way that some symmetry is preserved.
In the general case, however, a non-evident hidden symmetry can exist
even in complicated forms of the free energy, without being
easily noticeable.
So, one needs a reparametrization-invariant criterion
that can help recognize the presence of a symmetry
using only $K_\mu$, $A_\mu$, and $B_{\mu\nu}$.
In addition, it would also useful to know
what this symmetry is.

Both questions are answered by the following Proposition:
\begin{proposition}\label{prop-classification-symmetries}
Suppose that the free energy (\ref{freeenergy}) is explicitly invariant under some
transformations of $r^\mu$.
Let $G$ be the maximal group of such transformations.
Then:\\
(a) $G$ is non-trivial if and only if there exists an eigenvector
of $B_{\mu\nu}$ orthogonal both to $A_\mu$ and $K_\mu$;\\
(b) group $G$ is one of the following groups: $Z_2$, $(Z_2)^2$,
$(Z_2)^3$, $O(2)$, $O(2)\times Z_2$, or $O(3)$,
and depends on the number of the eigenvectors of $B_{\mu\nu}$ to which $A_\mu$ and $K_\mu$ are orthogonal,
and on whether $B_{\mu\nu}$ has degenerate eigenvalues.
\end{proposition}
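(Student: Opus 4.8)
The plan is to work entirely in the frame where $B_{\mu\nu}$ is diagonal, Eq.~(\ref{Bi}), which exists for a potential stable in the strong sense, and first to pin down the linear structure of the admissible maps. An explicit symmetry originates from a linear or antilinear transformation of $\Phi$ that leaves $F$ invariant; such transformations lie in $SL(2,C)$ (possibly composed with complex conjugation) and therefore induce linear maps $r^\mu \to \Lambda^\mu{}_\nu r^\nu$ with $\Lambda$ in the orthochronous subgroup $O^\uparrow(1,3)$ of $O(1,3)$, i.e.\ those preserving the forward lightcone $LC^+$. I would thus take $G$ to be the subgroup of $O^\uparrow(1,3)$ that fixes, separately, $B_{\mu\nu} r^\mu r^\nu$, $A_\mu r^\mu$ and $K_\mu \rho^\mu$; since $\rho^\mu$ transforms exactly as $r^\mu$, the last two conditions are simply that $\Lambda$ fixes the four-vectors $A^\mu$ and $K^\mu$.

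The key simplification is that any $\Lambda$ preserving both $g_{\mu\nu}$ and $B_{\mu\nu}$ must commute with $M := g^{-1}B = \mathrm{diag}(B_0,B_1,B_2,B_3)$. Because strong stability forces $B_0 > B_1,B_2,B_3$, the timelike eigenvalue $B_0$ is strictly largest and hence simple, so $\Lambda$ cannot mix $e_0$ with the spacelike axes; moreover $\Lambda e_0 = e_0$, since the other sign would send $LC^+$ to $LC^-$. Therefore $\Lambda$ acts as the identity on $r_0$ and restricts to an element of $O(3)$ on the spatial vector $r_i$ that preserves the form $\mathrm{diag}(B_1,B_2,B_3)$. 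I would then read off this ``geometric'' group $G_B$ directly from the eigenvalue pattern: $G_B=(Z_2)^3$ (sign flips of the three eigenaxes) when $B_1,B_2,B_3$ are distinct; $G_B=O(2)\times Z_2$ when exactly two coincide (the $O(2)$ rotating the degenerate plane, the $Z_2$ flipping the remaining axis); and $G_B=O(3)$ when $B_1=B_2=B_3$.

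With this in hand, $G$ is the stabilizer inside $G_B$ of the spatial parts $\vec A$ and $\vec K$ (the time components $A^0,K^0$ are automatically fixed). For part~(a), the reflection $\sigma_i:\,r_i\to -r_i$ fixes $\vec A$ and $\vec K$ precisely when $A^i=K^i=0$, i.e.\ when the spacelike eigenvector $e_i$ is Minkowski-orthogonal to both $A^\mu$ and $K^\mu$; this gives the ``if'' direction. For the converse I would show that any nontrivial $\Lambda\in G$ possesses a spacelike eigendirection of $B$ orthogonal to both $\vec A$ and $\vec K$ — the reflection axis for an orientation-reversing element, or a direction in the rotated plane for a proper rotation — this direction lying in an eigenspace of $B$ because $\Lambda$ preserves $\mathrm{diag}(B_i)$. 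I would also record here that $e_0$ never qualifies: $K_0>0$ means $K^\mu$ is never orthogonal to $e_0$, which is exactly why only spacelike eigenvectors enter the criterion.

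Part~(b) is then a finite case analysis of the stabilizer of the pair $(\vec A,\vec K)$ in each $G_B$. When the $B_i$ are distinct one gets $G=(Z_2)^k$ with $k$ the number of eigenaxes orthogonal to both vectors, yielding $Z_2,(Z_2)^2,(Z_2)^3$. In the degenerate cases I would compute the stabilizer of two vectors inside $O(2)\times Z_2$ or $O(3)$: in $O(3)$ this stabilizer is $O(3)$ when both spatial parts vanish, $O(2)$ when they are nonzero but parallel (or one vanishes), and $Z_2$ (a single reflection through their common plane) when they are independent; the $O(2)\times Z_2$ geometric case is handled identically and produces $O(2)$ or $O(2)\times Z_2$ in its maximal instances. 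Collecting all outcomes reproduces exactly the list $Z_2,(Z_2)^2,(Z_2)^3,O(2),O(2)\times Z_2,O(3)$. The main obstacle is precisely this degenerate bookkeeping: when eigenvalues coincide the eigenvectors are no longer unique, so ``orthogonal to an eigenvector'' must be read as orthogonality to an eigendirection within a degenerate eigenspace, and one must verify that every configuration of $(\vec A,\vec K)$ relative to that eigenspace lands in the stated list and never outside it.
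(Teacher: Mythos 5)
Your proof follows essentially the same route as the paper's: pass to the $B_{\mu\nu}$-diagonal frame, write $G$ as the intersection of the stabilizers of $B_{\mu\nu}$, $A_\mu$ and $K_\mu$, reduce everything to a subgroup of the spatial $O(3)$, read off $G_B$ from the eigenvalue degeneracies ($(Z_2)^3$, $O(2)\times Z_2$, $O(3)$), and classify $G$ by the number of eigenaxes orthogonal to $\vec A$ and $\vec K$. Your justification of the reduction to $O(3)$ --- via the commutant of $g^{-1}B$ and the simplicity of the timelike eigenvalue $B_0$ forced by $B_0>B_i$ --- is in fact more explicit than the paper's one-line assertion, and your converse for part (a) (any nontrivial element of $O(3)$ that fixes $\vec A$ and $\vec K$ while preserving $\mathrm{diag}(B_1,B_2,B_3)$ singles out a spacelike eigendirection of $B$ orthogonal to both) is the same idea as the paper's statement that the minimal $Z_2$ symmetry of the pair $(\vec A,\vec K)$ must flip an eigenvector of $B_{ij}$.

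The one genuine omission is the weakly stable case. You restrict from the outset to potentials stable in a strong sense, for which the diagonal frame (\ref{Bi}) exists; but the proposition is stated for the general free energy (\ref{freeenergy}), and for a potential stable only in a weak sense $B^{\mu\nu}$ is \emph{not} diagonalizable by an $SO(1,3)$ transformation (cf.\ (\ref{Bi2}) with $\delta B\neq 0$), so your entire construction --- the commutant argument, the $O(3)$ reduction, the stabilizer bookkeeping --- does not apply as written. The paper closes this case by observing that the eigenvectors of $B^{\mu\nu}$ are then either lightlike or purely spacelike, that $K^\mu$ lies strictly inside $LC^+$ and therefore cannot be orthogonal to a lightlike eigenvector, and that one consequently needs to check the orthogonality conditions only for the transverse spacelike eigenvectors, which trims the list of possible groups to $Z_2$, $(Z_2)^2$, $O(2)$. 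You would need to add an analogous argument to cover the full statement.
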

\begin{proof}
Let us start with the potential stable in a strong sense.
Let us denote the group of all explicit symmetries of $B_{\mu\nu}$, $A_\mu$, and $K_\mu$
by $G_B$, $G_A$, and $G_K$, respectively.
Obviously,
\be
G = G_B \cap G_A \cap G_K\,.
\ee
The group of explicit symmetries is necessarily a subgroup of the $O(3)$ transformation
group of the 3-dimensional space in the $B_{\mu\nu}$-diagonal frame;
so one can switch to the spacelike parts only ($B_{ij}$, $A_i$, $K_i$).

Consider now $G_B$.
If all spacelike eigenvalues of $B_{\mu\nu}$ are different, then its only symmetries are
reflections of each of the spacelike eigenaxes, which generate group $G_B=(Z_2)^3$.
If two eigenvalues coincide, then $G_B = O(2)\times Z_2$, and if all three of them are
equal, then $G_B=O(3)$.
Note that in all of these cases the following statement holds: if some $Z_2$ group
is a subgroup of $G_B$, then its generator flips
the direction of some eigenvector of $B_{ij}$.

Similarly, $G_A$ is $O(2)$ (rotations around the axis defined by $A_i$),
if $A_i$ is a non-zero vector, and $O(3)$ otherwise.
The same holds for $K_i$, the only difference being the direction of the axis.
If we want $G$ to be non-trivial, then the lowest possible symmetry of $A_i$ and $K_i$
together (given by a $Z_2$ group) must be also a symmetry of $B_{ij}$,
i.e. it must flip one of the eigenvectors of $B_{ij}$.
In other words,
both $A_i$ and $K_i$ are orthogonal to this eigenvector.
Since this purely spacelike eigenvector is also the eigenvector of $B_{\mu\nu}$,
we arrive at the first statement of this Proposition.

Detailed classification depends on the {\em number} of eigenvectors of $B_{ij}$
that are orthogonal to $A_i$ and $K_i$.
\begin{itemize}
\item
If $A_i$ and $K_i$ are orthogonal to all three eigenvectors ($A_i=K_i=0$), then $G=G_B$.
\item
If $A_i$ and $K_i$ are orthogonal to two eigenvectors ($A_i \| K_i$ and are
themselves eigenvectors of $B_{ij}$), then $G = (Z_2)^2$ or $O(2)$.
\item
Finally, if there is only one eigenvector of $B_{ij}$ orthogonal both to $A_i$ and $K_i$,
then the symmetry group is $Z_2$.
\end{itemize}
For a potential stable in a weak sense, we first note that the eigenvectors
of $B^{\mu\nu}$ are either the lightcone vectors or purely spacelike
eigenvectors.
Since $K^\mu$ lies inside $LC^+$, in cannot be orthogonal
to any lightcone vector. Therefore, one has to check the above conditions
only for the ``transverse'' eigenvectors,
which reduces the above list of possible symmetry groups to $Z_2$, $(Z_2)^2$, $O(2)$.
\end{proof}

The necessary and sufficient condition formulated in the first part of this Proposition
can be written in a reparametrization-invariant way. The method is essentially the same
as in \cite{group,nishi} and is based on a simple observation:
if a 3-vector $a_i$ is orthogonal to some eigenvector of a real symmetric
matrix $b_{ij}$, then the triple scalar product of vectors $a_i$,
$b_{ij} a_{j}$, and $b_{ij} b_{jk} a_{k}$ is zero.
In Minkowski space we introduce
\be
K_{0\mu} \equiv K_\mu\,,\quad
K_{1\mu} \equiv B_{\mu}{}^\nu K_\nu\,,\quad
K_{2\mu} \equiv (B^2)_{\mu}{}^\nu K_\nu\,,\quad
K_{3\mu} \equiv (B^3)_{\mu}{}^\nu K_\nu\,,
\ee
where $B^k$ is the $k$-th power of $B_{\mu\nu}$.
The same series can be written for $A_\mu$.
For any four four-vectors $a^\mu$, $b^\mu$, $c^\mu$, and $d^\mu$ we introduce
the short-hand notation
$$
(a,b,c,d) \equiv \epsilon_{\mu\nu\rho\sigma} a^{\mu} b^{\nu} c^{\rho} d^{\sigma}\,.
$$
Then the condition ``there exists an eigenvector of $B_{\mu\nu}$ orthogonal to $K_\mu$''
can be written as
\be
(K_{0}, K_{1}, K_{2}, K_{3}) = 0\,.\label{epsilon4k}
\ee
Note that since $K^\mu$ always lies inside the future lightcone, it can be orthogonal
only to spacelike eigenvectors of $B_{\mu\nu}$, which is exactly what is needed.
Then, the statement of Proposition~\ref{prop-classification-symmetries}a can be reproduced if
we accompany (\ref{epsilon4k}) with a similar condition for $A_\mu$,
\be
(A_{0}, A_{1}, A_{2}, A_{3}) = 0\,,\label{epsilon4m}
\ee
and with the condition that these two 4-vectors be orthogonal to the {\em same} eigenvector of $B_{\mu\nu}$,
for example:
\be
(A_{0}, A_{1}, A_{2}, K_{0}) = 0\,,\label{epsilon3m1k}
\ee

Conditions (\ref{epsilon4k})--(\ref{epsilon3m1k}) can be straightforwardly checked in any basis
once $B_{\mu\nu}$, $A_\mu$, and $K_\mu$ are known.
Thus, the presence of any hidden symmetry can be verified without
the need to find this symmetry explicitly.

\subsection{Symmetries of the potential vs. symmetries of the free energy}\label{section-symm-pot-lang}

Explicit symmetries of the entire free energy
depend on $B_{ij}$, $A_i$, and $K_i$, while the symmetries of the potential
depend only on $B_{ij}$ and $A_i$. Therefore, it might happen that
the potential has a larger symmetry group than the entire free energy.
A simple example is
\be
F = \kappa \left(|\vec{D} \psi_1|^2 + |\vec{D} \psi_2|^2 \right)+
16\lambda \left(|\psi_1|^2-v^2\right)^2 + \lambda \left(|\psi_2|^2-4v^2\right)^2\,.
\ee
The potential here is symmetric under $\psi_2 \leftrightarrow 2\psi_1$, while
the gradient term is not.

The two notions, i.e. the symmetry of the potential or of the entire free energy,
play different roles.
When one seeks for the minimum of the Landau potential, the coefficients in the
gradient term ($K^\mu$) are irrelevant.
However, the symmetry of the spectrum of small
oscillations of the order parameters above the ground state
is the one of the entire free energy functional.

\subsection{Spontaneous breaking of an explicit symmetry}\label{section-spont-viol}

Even if the Landau potential is invariant under some transformation of $\Phi$,
the values $\langle \Phi\rangle$ that minimize it do not necessarily
have to preserve the same symmetry. In the orbit space, if the Landau potential
is invariant under a group $G$ of transformation of $r^\mu$,
then the position of the global minimum might be invariant only under
a proper subgroup of $G$.
In such situations one speaks of spontaneous breaking of the symmetry.
Since the {\em set} of all global minima is invariant
under the full explicit symmetry group $G$, the spontaneous breaking
of an explicit symmetry always leads to degenerate global minima.

For our problem, several results follow immediately from Proposition~\ref{prop-two-minima}:
\begin{enumerate}
\item
The global minimum can be only twice degenerate.
\item
Minima that preserve and violate any discrete symmetry cannot coexist.
\item
The maximal breaking of any discrete symmetry consists in removing
only one $Z_2$ factor: $(Z_2)^{k} \to (Z_2)^{k-1}$, with $k=1,2,3$.
\end{enumerate}

In addition, in \cite{minknew} it was proved that
the twice degenerate global minimum of Landau potential
with quadratic and quartic terms is {\em always} realized
via spontaneous breaking of some explicit $Z_2$ symmetry of the potential
(but not necessarily of the entire free energy!).

Let us now consider the question {\em when} a given explicit symmetry is broken,
focusing on the discrete symmetry case.

First of all, the global minimum must be degenerate. This immediately
leads to the conclusion that the spontaneous violation can take place only in potentials
stable in a strong sense, and in addition, only when $A_\mu$ lies inside
the principal caustic cone.
To make the discussion concrete, consider $A_\mu$ and $K_\mu$ in the $B_{\mu\nu}$-diagonal frame.
Suppose that all $B_i$ are distinct and the components $A_3 = K_3 = 0$,
while the other components are non-zero.
Then, the free energy has an explicit $Z_2$ symmetry generated by reflections of the third coordinate.
This explicit symmetry is conserved, if the global minimum is at
$r^\mu = (r_0,\,r_1,\,r_2,\,0)$,
and it is spontaneously broken if the two degenerate global minima
are at $r_\pm^\mu = (r_0,\,r_1,\,r_2,\,\pm r_3)$ with $r_3 \not = 0$.

Let us now recall the ``shrinking ellipsoid'' construction of Section~\ref{section-extrema}.
A degenerate extremum implies that two distinct points $n_{i\,\pm}$, when inserted in
system (\ref{lagrange2}), give the same point $A_i = (A_1,\,A_2\,,0)$ at the same $r_0$.
This happens only when $r_0= r_0^{(3)}$ and the planar vector $(A_1,\,A_2)$ lies inside
the ellipse with semiaxes
$$
A_0{|B_1-B_3| \over B_0-B_3}\,,\quad A_0{|B_2-B_3| \over B_0-B_3}\,.
$$
Besides, as we prove in Appendix~\ref{section-app-condition},
in order for this extremum to be minimum, $B_3$ must be the largest
(i.e. the closest to $B_0$) eigenvalue among all $B_i$.
Thus, one arrives at the following necessary and sufficient
reparametrization-invariant criterion
for the spontaneous violation of a $Z_2$ symmetry (along the third axis):
\be
B_3>B_1,\,B_2\quad \mbox{and}\quad
{A_1^2 \over (B_3-B_1)^2} + {A_2^2 \over (B_3-B_2)^2} < {A_0^2 \over (B_0-B_3)^2}\,.
\label{violation}
\ee
It immediately follows from here that $a^\mu$ defined in Section~\ref{section-geometric-reform}
lies inside the forward lightcone.

Finally, we would like to stress one important point.
What is relevant for the whole discussion is the group $G$ of explicit symmetries and its
reduction upon symmetry breaking, but not the particular {\em realization} of the transformations of this group.
For example, the free energy can be symmetric under $\psi_i \to \psi_i^*$ transformation or
under $\psi_1\leftrightarrow \psi_2^*$ transformation. These seemingly distinct
$Z_2$ symmetries are, in fact, just different realizations of the {\em same symmetry class}.
This becomes evident in the orbit space, as the former transformation corresponds
to the flip of the second axis, while the latter one corresponds
to the flip of the third axis; so, both models are related by
a reparametrization transformation.
Therefore, all properties of spontaneous violation of these two particular
sorts of the $Z_2$ symmetry are in fact identical.

Thus, there exists a reparametrization-invariant class of $Z_2$-symmetric models,
a reparametrization-invariant class of $(Z_2)^2$-symmetric models, etc.
Our analysis applies to {\em all} particular realizations of a given symmetry in contrast.
This rather simple fact illustrates the usefulness of considering the most general GL model
and might lead to establishing of direct links between seemingly unrelated models.

\section{Phase diagram and phase transitions}\label{section-phase-diagram}

\subsection{Phase diagram}

The results obtained in the previous sections allow us to explicitly describe the phase
diagram of the most general model with two order parameters in the mean-field approximation.
We do this by classifying the phases according to the symmetries
of the free energy functional and to the properties of the ground state.
For definiteness, we again sort the eigenvalues $B_i$ as
$B_1 \le B_2 \le B_3$.
\begin{itemize}
\item
Potential is stabilized by the quadratic term ($A^\mu$ lies inside $LC^-$).
The global minimum is at the origin, $\psi_1=\psi_2=0$;
this is the high-symmetry phase.
\item
Potential stable in a weak sense. The global minimum is always
non-degenerate and preserves any explicit symmetry of the free energy.
\item
Potential stable in a strong sense. $B^{\mu\nu}$ can be diagonalized, and one can
work with its spacelike part.
\begin{itemize}
\item
All $B_i$ are distinct.
\begin{itemize}
\item
$A_i$ and $K_i$ are generic vectors (not orthogonal to any eigenvector of $B_{ij}$).
No explicit symmetry is present.
There can be one or two non-degenerate minima, depending whether $A_i$ lies inside
the principal caustic region.
\item
$A_i$ and $K_i$ are both orthogonal to one eigenvector of $B_{ij}$.
The explicit symmetry group is $Z_2$. The ground state can either break or preserve this symmetry.
The symmetry is broken if it is the third axis that $A_i$ and $K_i$ are orthogonal to (i.e. $A_3=K_3=0$) and if
condition (\ref{violation}) is satisfied.
\item
$A_i$ and $K_i$ are both parallel to the same eigenvector of $B_{ij}$.
The explicit symmetry group is $(Z_2)^2$.
The ground state can either preserve this symmetry or break it to $Z_2$.
The criterion of the symmetry breaking is the same, (\ref{violation}),
apart from the fact that now one of $A_1$, $A_2$ is zero.
\item
$A_i$ and $K_i$ are orthogonal to all three eigenvectors of $B_{ij}$
(i.e. $A_i=K_i=0$, $A_0>0$). The explicit symmetry group is $(Z_2)^3$.
The global minimum is always twice degenerate and breaks this symmetry
to $(Z_2)^2$.
\end{itemize}
\item
Two eigenvalues among $B_i$ coincide. Case $B_1 < B_2=B_3$. \\
The principal caustic cone reduces to a segment along the first axis.
\begin{itemize}
\item
$A_i$ is not aligned along the first axis.
Then, the global minimum is non-degenerate. If $K_i$ lies in the
$(e_{(1)i},A_i)$ plane, then there is an explicit $Z_2$ symmetry
which is preserved at the minimum.
\item
$A_i$ is the eigenvector of $B_{ij}$ along the first axis ($A_2=A_3=0$).
Then, if condition (\ref{violation}) is satisfied,
there is a continuum (namely, a circle) of degenerate minima,
otherwise, the global minimum is non-degenerate.
If, in addition $K_2=K_3=0$, then the explicit symmetry group is $O(2)$,
and in the case of symmetry breaking, it is broken to $Z_2$.
Instead, if $K_i$ is a generic vector, then the explicit symmetry group is $Z_2$,
and it may be preserved or broken depending on
which minimum in the continuum is selected.
\item
$A_i=K_i=0$, $A_0>0$. The explicit symmetry group is $O(2)\times Z_2$.
There is always a continuum (a circle) of global minima, and the symmetry
is broken to $(Z_2)^2$.
\end{itemize}
\item
Two eigenvalues among $B_i$ coincide. Case $B_1=B_2 < B_3$. \\
The analysis is similar to the case with all distinct $B_i$,
with the following differences:
\begin{itemize}
\item
If $A_i$ and $K_i$ are both parallel to the eigenvector $e_{(3)i}$,
the explicit symmetry group is $O(2)$ and it is always preserved at the global minimum.
\item
If $A_i=K_i=0$, then the explicit symmetry group is $O(2)\times Z_2$, which
is broken at the global minimum to $O(2)$.
\end{itemize}
\item
All three eigenvalues $B_i$ coincide.
\begin{itemize}
\item
For a generic pair of $A_i$ and $K_i$, there is always an explicit $Z_2$ symmetry.
It is promoted to the $O(2)$ symmetry if $A_i$ and $K_i$ are parallel, and
to the $O(3)$ symmetry if $A_i=K_i=0$. The symmetry is always preserved,
apart form the case $A_0=0$, when it is broken to $O(2)$ or $Z_2$.

\end{itemize}
\end{itemize}
\end{itemize}

\subsection{First and second order phase transitions}

A remarkable property of the two-order-parameter model is that
it can have a first-order phase transition
even at zero temperature and in the mean-field approximation.
It is entirely due to coexistence
of two local minima (in the orbit space) with different depths.
If upon continuous change of the coefficients
the relative depth of the two distinct minima changes sign
(the shallower minimum becoming the deeper one),
the system occupying initially the global minimum
becomes metastable and can jump into the new global minimum
via fluctuations or quantum tunneling.

If $B_3$ is non-degenerate, then the surface of first-order phase transitions
in the $A_i$ space is the interior of the ellipse at $r_0^{(3)}$,
i.e. it is given by $A_i = (A_1,\,A_2,\,0)$, where $A_1$, $A_2$
satisfy (\ref{violation}).
The border of this ellipse,
$$
A_3 = 0\,,\quad
{A_1^2 \over (B_3-B_1)^2} + {A_2^2 \over (B_3-B_2)^2} = {A_0^2 \over (B_0-B_3)^2}\,,
$$
is the critical line, at which second-order phase transition takes place.
If $B_3$ is degenerate, $B_1<B_2=B_3$, then
the second order phase transition takes place at isolated points
$$
A_3 = A_2 = 0\,,\quad
A_1 = \pm A_0 {B_3-B_1 \over B_0-B_3}\,,
$$
and the points of the first-order transitions form a linear segment
between them.
Finally, if all $B_i$ are degenerate, then there is a single critical point
at the origin, $A_i = 0$, and there is no first order phase transition.

Reconstruction of critical surfaces/lines in the $A_\mu$-space is obvious.

\subsection{Critical properties: an example}

It appears plausible that all the mean-field critical exponents
of the general 2OP GL model are of geometric nature
and can be calculated without
the knowledge of the exact position of the global minimum.
Here, we do not explore this issue in full detail, but just illustrate
it with one example: calculation of the critical exponent of the
correlation length with the aid of differential geometry.

\begin{figure}[!htb]
   \centering
\includegraphics[width=12cm]{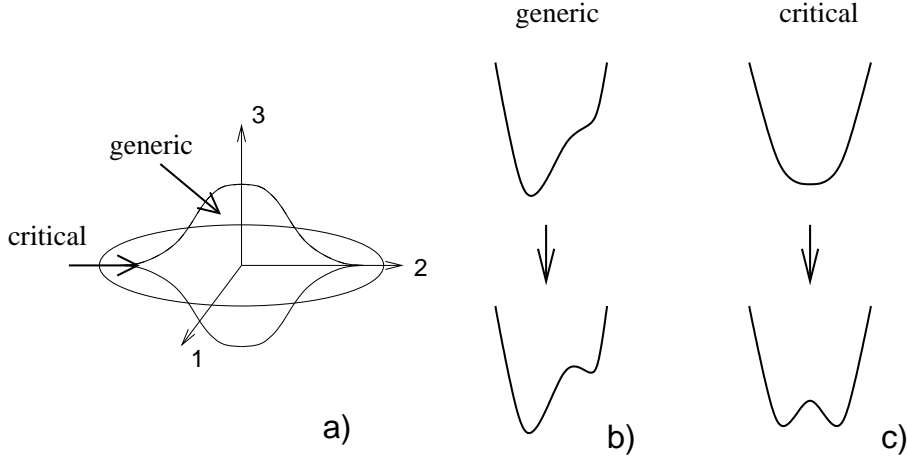}
\caption{(a) Schematic view of the principal caustic region.
The two arrows enter it via a generic or a critical point.
(b) Schematic change of the potential upon a generic entrance into the principal caustic cone.
(c) Schematic change upon the entrance via a critical point.}
   \label{fig-bifurcation}
\end{figure}

Let us fix $B^{\mu\nu}$ and move $A^\mu$ continuously in its parameter space.
When it crosses the principal caustic cone (see Section~\ref{section-principal}),
the number of minima changes. Two scenarios are possible, Fig.~\ref{fig-bifurcation}a.
If $A^\mu$ enters the principal caustic cone through a generic point,
then an additional local minimum appears together with an additional saddle point,
as it is shown schematically in Fig.~\ref{fig-bifurcation}b.
This bifurcation does not involve the global minimum.
However, if $A^\mu$ enters the principal caustic cone through
any of the critical points, then it is the global minimum
that bifurcates into minimum-saddle-minimum sequence, see Fig.~\ref{fig-bifurcation}c.

When $A^\mu$ approaches the critical surface/line, the corresponding eigenfrequency
decreases and turns zero exactly at the critical surface.
If the distance from $A_i$ to the critical surface is
$\varepsilon \to 0$, the eigenfrequency associated with this bifurcation
decreases as
\be
\omega^2 \propto \varepsilon^\delta\,.
\ee
Here we used the fact
that the Jacobian of the map of non-Goldstone modes in $\psi_i$-space
to the surface of $LC^+$ is regular, if the extremum is not at the origin,
see Eq.~(\ref{jacobian}).
The correlation length then behaves as $r_c \propto \varepsilon^{-\delta/2}$.
We argue that the value of $\delta$ is of geometric nature and can be calculated
without the knowledge of the exact position of the minimum.

Let us first note that in the case of a single order parameter $\delta = 1$
simply because the eigenfrequency $\omega^2$ is linearly proportional to the coefficient $a$
in the potential (\ref{landau}). In the two-order-parameter case,
due to the higher dimensionality of the $A^\mu$-space, one can
approach a critical point from different directions.

\begin{figure}[!htb]
   \centering
\includegraphics[width=8cm]{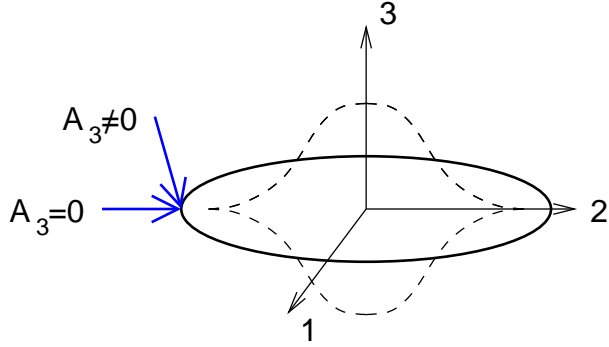}
\caption{(Color online) The principal caustic region in the $A_i$ space (dashed line) and its border, the critical line
(think solid line). The critical exponent $\delta$ depends on whether $A_i$ approaches the critical
line always staying in the plane $A_3=0$ or from outside the plane.}
   \label{fig-critical}
\end{figure}

As we described in Section~\ref{section-geometric-reform}, the search for the global minimum
can be reformulated as a search for points lying on $LC^+$ that are closest
to a given point $a^\mu$ in the Euclidean metric diag$(B_0,\,|B_1|,\,|B_2|,\,|B_3|)$.
In Appendix~\ref{section-app-critical}, using a planar example,
we show how to apply differential geometry
to analyze the properties of the potential near a critical point.
We showed, in particular, that the value of $\delta$ depends on the direction
of approach to the critical point:
\bea
\mbox{generic direction}\  & \to & \delta = {2/ 3}\,,\nonumber\\
\mbox{symmetric approach}\  & \to & \delta = 1\,,\label{delta}
\eea
These exponents are robust in the sense that they remain the same for almost all,
in the measure-theoretic meaning,
regular planar curves. It applies also to the second order curves which
share the key property of the generic curves that they have points of no more than
4th-order contact with a circle.

This technique can be extended to higher dimensions leading to the same results.
So, exponents (\ref{delta}) apply to our problem as well, where ``symmetric approach''
is understood as ``$A_i$ lying in the $A_3=0$ plane'', see Fig.~\ref{fig-critical}.

It would be interesting to check the critical properties of all possible phase transition
in 2OP GL model and see how many classes of critical behavior it can incorporate.

\section{Examples}\label{section-examples}

Here we illustrate the general approach with several simple examples.
Some of them are relevant for condensed-matter problems discussed in literature.

\subsection{Real coefficients}

Consider a free energy functional (\ref{freeenergy}) with all real coefficients.
It implies that the free energy remains invariant under simultaneous transformation
$\psi_i\to \psi_i^*$,
which corresponds in the orbit space to the reflection of the second coordinate:
\be
r^\mu = (r_0,\,r_1,\,r_2,\,r_3) \to (r_0,\,r_1,\,-r_2,\,r_3)\,.
\ee
Consequently, the four-vectors and four-tensor of the coefficients are:
\be
B^{\mu\nu} =
\left(\begin{array}{cccc}
\cdot & \cdot & 0 & \cdot\\
\cdot & \cdot & 0 & \cdot\\
0 & 0 & -B_2 & 0\\
\cdot & \cdot & 0 & \cdot \end{array}\right)\,,\quad
A^\mu\,,K^\mu = (\cdot,\ \cdot,\ 0, \ \cdot)\,,
\ee
where dots indicate generic values. Evidently, conditions (\ref{epsilon4k})--(\ref{epsilon3m1k})
are satisfied.

Just to give a particular example,
consider the free energy functional of a two-gap superconductor
in the dirty limit, see Eq.~(49) in \cite{EXtwogapGurevich}.
Its potential can be rewritten in the reparametrization-invariant way (\ref{freeenergy2})
with
\bea
&&B^{\mu\nu} =
{1\over 2}\left(\begin{array}{cccc}
{b_1 + b_2 \over 2} - b_i & -2b_i & 0 & - {b_1 - b_2 \over 2}\\
-2b_i & 0 & 0 & 0\\
0 & 0 & 0 & 0\\
-{b_1 - b_2 \over 2} & 0 & 0 & {b_1 + b_2 \over 2} - b_i \end{array}\right)\,,\label{gurevich}\\
&& A^\mu = {1\over 2}\left(-a_1-a_2,\, a_i,\, 0,\, a_1-a_2\right)\,.\nonumber
\eea
Here we used the notation of \cite{EXtwogapGurevich}:
coefficients $a_1$, $b_1$ and $a_2$, $b_2$ refer to the properties of the first and second
order parameters, respectively, while $a_i$, $b_i$ describe interaction terms.
The gradient terms considered in \cite{EXtwogapGurevich} are anisotropic,
but they also contain real coefficients. As a result,
$A^\mu$ and $K^\mu$ are orthogonal to the second eigenvector
of $B^{\mu\nu}$. Note that in this example, the eigenvalue $B_2=0$.
In order to find the other eigenvalues, one has to solve the cubic characteristic
equation.

If the position of the global minimum has $\lr{r_2} = 0$, then the symmetry is preserved,
and there is no relative phase between the two order parameters.
If $\lr{r_2} \not = 0$, then the symmetry is spontaneously broken,
and $\lr{\psi_1}$ and $\lr{\psi_2}$ in the ground state have a non-zero relative
phase. In order to find whether spontaneous violation takes place,
one has to diagonalize $B^{\mu\nu}$, find its eigenvalues as well as find
$A^\mu$ in this frame, and then check inequality (\ref{violation}).

\begin{figure}[!htb]
   \centering
\includegraphics[width=6cm]{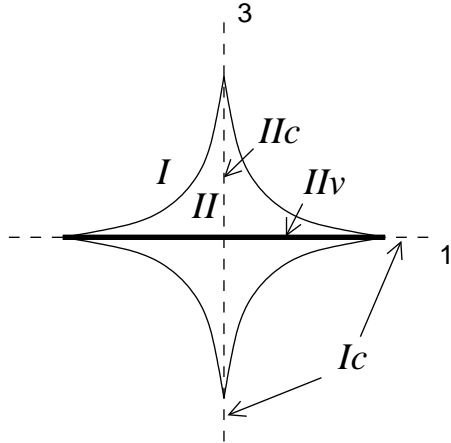}
\caption{Phases of the general Ginzburg-Landau model with two real order parameters on the $A_i$ plane,
classified according to the number of minima and conservation/violation of explicit symmetries.
Shown is the case $B_1<B_3$.}
   \label{fig-real}
\end{figure}

If one requires, in addition, that the order parameters $\psi_i$ themselves be
real, then $r_2 \equiv 0$, and the second axis can be omitted altogether.
The orbit space is then simplified to the forward lightcone in the $1+2$-dimensional
Minkowski space. Repeating the analysis of Section~\ref{section-minima},
one obtains now at most four non-trivial extrema, among which up to two can be minima,
and only one caustic cone.

The phase diagram in this case is simpler.
For example, in Fig.~\ref{fig-real} we show phases in the $A_i$ space for the case $B_3>B_1$
and, for simplicity, we assume that $K_i=0$.
The astroid shown here is the planar analogue of the cusped region
from Fig.~\ref{fig-caustic-region} and Fig.~\ref{fig-zones2}.
It separates the $A_i$ regions corresponding to potentials
with one minimum (phase {\em I}) and two minima (phase {\em II}).
In addition, if $A_i$ lies on the axes, the potental has an explicit symmetry.
Dashed lines corresponds to the cases when the ground state
conserves the symmetry (phases {\em Ic} and {\em IIc}),
while the thick solid line corresponds to the
phase that spontaneously break the discrete symmetry (phase {\em IIv}).

\subsection{No $\psi_1/\psi_2$ mixing in the quartic potential: $b_5=b_6=0$}

The situation simplifies considerably if the quartic potential does not
mix $\psi_1$ and $\psi_2$, i.e. when $b_5=b_6=0$.
In this case $B^{\mu\nu}$ breaks into two $2\times 2$ blocks
and can be easily diagonalized by a boost along the third axis
(see Appendix~\ref{section-app-B} for details)
with ``rapidity''
\be
\eta = {1 \over 4} \log\left({b_1 \over b_2}\right)\,,\label{boost}
\ee
where we assumed $b_1>b_2$,
and by the rotation between the first and second axes by an angle equal to the
half of the phase of $b_4$.
The resulting eigenvalues are:
\be
B_0 = {1 \over 2}(\sqrt{b_1 b_2} + b_3)\,,\quad
B_{1,2} = \pm {|b_4| \over 2}\,,\quad
B_3 = {1 \over 2}(-\sqrt{b_1 b_2} + b_3)\,.
\ee
The condition for the stability in a strong sense of the potential is
\be
b_1>0,\,\quad b_2>0,\, \quad \sqrt{b_1 b_2} + b_3 > |b_4|\,.
\ee
If $A^\mu$ and $K^\mu$ are generic vectors, then the further analysis
proceeds as in the generic case.

\subsection{Interaction only via the $|\psi_1|^2|\psi_2|^2$ term}

Let us assume now that the only interaction between the two OPs
is given by the $b_3|\psi_1|^2|\psi_2|^2$ term.
The four-vector $A^\mu$ can be written as $(A_0,\,0,\,0,\,A_3)$,
so that the potential has an explicit $O(2)$ symmetry.
The eigenvalues of $B^{\mu\nu}$ expressed in terms of the original coefficients are
\be
B_0 = {1 \over 2}(\sqrt{b_1 b_2} + b_3)\,,\quad
B_{1,2} = 0\,,\quad
B_3 = {1 \over 2}(-\sqrt{b_1 b_2} + b_3)\,,
\ee
where for stability we require $\sqrt{b_1 b_2} + b_3 > 0$.
In the frame where $B^{\mu\nu}$ is diagonal, $A^\mu$ takes the form
\be
(\tilde{A}_0,\,0,\,0,\,\tilde{A}_3)=(A_0\cosh\eta + A_3\sinh\eta,\,0,\,0,\,A_3\cosh\eta + A_0\sinh\eta),
\ee
where $\eta$ is given by (\ref{boost}).

If $B_3>0$, i.e. $b_3 > \sqrt{b_1 b_2}$, then the explicit $O(2)$ symmetry is always conserved,
since $B_1$ and $B_2$ are not the largest spacelike eigenvalues.
The global minimum is at $\lr{r^\mu} \propto n_+^\mu$ or $n_-^\mu$,
which corresponds to
$$
\lr{\psi_1} \not = 0\,,\lr{\psi_2} = 0\,,\quad\mbox{or}\quad\lr{\psi_1} = 0\,,\lr{\psi_2} \not = 0\,.
$$
If $B_3<0$ and if $A^\mu$ lies inside the caustic cone,
then this symmetry is spontaneously broken, and there exists a continuum
of global minima with both $\lr{\psi_1} \not = 0$ and $\lr{\psi_2} \not = 0$
and an arbitrary relative phase between them.
The condition that $A^\mu$ lies inside the caustic cone is
$$
\left|{\tilde{A}_3 \over B_1-B_3}\right| < {\tilde{A}_0 \over B_0-B_1}\,,
$$
which, in terms of the original coefficients translates into
\be
a_1{b_3 \over b_1} < a_2 < a_1{b_2 \over b_3}\,.
\ee
Of course, the same bounds can be obtained by direct calculations.

Note that in the case of no interaction at all, $b_3=0$, the
condition for the symmetry violation reads $|A_3|<A_0$. It means
$a_1>0$, $a_2>0$, which is indeed expected.

\section{Solitons}\label{section-solitons}

Two local order parameters can lead to existence of solitons,
i.e. states with non-trivial coordinate dependence
of the mean-field values of the order parameters $\lr{\psi_i}(\vec r)$
stable against small variations $\delta\lr{\psi_i}(\vec r)$
of these OP profiles.
Some particular versions of such solitons have been already described
in literature. For example, in \cite{soliton},
a one-dimensional two-band superconductor
with a simple interband interaction term was considered,
whose ground state corresponded to $\lr{\psi_1}$ and $\lr{\psi_2}$
with zero relative phase.
Then, a typical sine-Gordon soliton was constructed
with the relative phase between the two OPs continuously changing
from zero to $2\pi$ at $x = \pm \infty$, correspondingly.
Similar solitons in the scalar sector of 2HDM were described
in \cite{tomaras}.

Existence of solitons in a given 2OP GL model depends
on the geometry of the potential in the orbit space.
For example, in order to support a one-dimensional soliton similar to
the one described above, the Landau potential must have a certain ``valley''
(i.e. a region of low values of the potential) of non-trivial topology on the forward lightcone $LC^+$,
that would include the global minimum and a saddle point.
At $x \to -\infty$, $\lr{\psi_i}(x)$ approach their values at the global minimum.
As $x$ increases, the corresponding point
$\lr{r^\mu}(x)$ in the orbit space moves away from
the global minimum position, follows some path in the valley and returns again to the global minimum.
Small variations of $\lr{\psi_i}(x)$ would pull this path out of the valley,
increasing its potential energy.

Existence, stability and geometric properties (e.g. dimensionality)
of these solitons are sensitive only to the general structure of the model, and do not
require one to search for the explicit position of the extrema.
Therefore, one can hope to obtain these criteria for a general 2OP GL model
in terms of geometric constructions studied in this paper.

\section{Multicomponent order parameters}\label{section-multi}

So far, we assumed that the order parameters $\psi_1$ and $\psi_2$ are just complex numbers.
However, in many physical situations one introduces {\em multicomponent order parameters}.
Examples include 2HDM, superfluidity in $^3$He, non-conventional superconductivity,
spin-density waves, etc.

The formalism presented above is applicable to these cases as well. In fact, it was first developed
in \cite{mink,minknew} specifically for 2HDM.
Here, we discuss characteristic features that appear in a generic GL model with two $N$-vector
order parameters and a $U(N)$-symmetric potential.

\subsection{Modifications to the formalism}

Let us assume that each $\psi_i$ is an $N$-dimensional complex vector:
$\psi_{i\,\alpha}$, $\alpha = 1,\,\dots,\,N$.
A $U(N)$-symmetric potential must depend on the order parameters only via scalar combinations
$(\psi^\dagger_i \psi_j)$, $i,j,=1,2$, which parametrize the $U(N)$-orbits.
The only difference with the scalar case is that an additional term proportional to
\be
(\psi^\dagger_1 \psi_2) (\psi^\dagger_2 \psi_1) \not = |\psi_1|^2 |\psi_2|^2\label{newterm}
\ee
appears in the potential, with a new independent coefficient $b_3^\prime$ in front.
The definition of $r^\mu$ remains the same; however
\be
r^\mu r_\mu = 4\left[(\psi^\dagger_1 \psi_2) (\psi^\dagger_2 \psi_1) - |\psi_1|^2 |\psi_2|^2\right] \ge 0\,.\label{LCnew}
\ee
Therefore, the orbit space now is not only the surface, but also the {\em interior} of the forward lightcone $LC^+$.
This removes the degree freedom in definition of $B_{\mu\nu}$, making it uniquely defined:
\be
B^{\mu\nu} = {1\over 2}\left(\begin{array}{cccc}
{b_1+b_2 \over 2} + b_3 & -\Re(b_5 + b_6)
    & \Im(b_5 + b_6) & -{b_1-b_2 \over 2} \\[1mm]
-\Re(b_5 + b_6) & b_3^\prime + \Re b_4 & -\Im b_4 & \Re(b_5 - b_6) \\[1mm]
\Im(b_5 + b_6) & -\Im b_4 & b_3^\prime -\Re b_4 & -\Im(b_5 - b_6) \\[1mm]
 -{b_1-b_2 \over 2} & \Re(b_5 - b_6) & -\Im(b_5 - b_6)
 & {b_1+b_2 \over 2} - b_3
\end{array}\right)\,.\label{Bmununew}
\ee
The requirement that the potential is stable in a strong sense implies
that $B_{\mu\nu}$ must be positive definite {\em on and inside} $LC^+$.
This leads not only to $B_0 > B_i$, but also to $B_0>0$.
Note that due to the absence of freedom in $B_{\mu\nu}$,
cases with singular $B_{\mu\nu}$ and with $B_i$ of different signs must be considered as well.

\subsection{Consequences}

Let us discuss the modification of the above analysis due to the multicomponent order parameters.

A new phase appears, which is characterized by a stronger breaking of the symmetry
of the potential. It corresponds to the global minimum of the potential $\lr{r^\mu}$ lying
strictly inside the lightcone $LC^+$. This is possible only when
$\lr{\psi_1}$ and $\lr{\psi_2}$ are not proportional to each other.
In other words, one can always perform a simultaneous ``intra-vector'' $U(N)$ transformation
of both order parameters that makes them
\be
\lr{\psi_1} = \left( \begin{array}{c}
0 \\  \vdots \\ 0 \\ v_1 \end{array}\right) \,,\quad
\lr{\psi_2} = \left( \begin{array}{c}
0 \\  \vdots \\ u \\ v_2 e^{i\xi}
 \end{array}\right) \,,\label{psi-charged}
\ee
where dots indicate zeros. Here, $u$, $v_1$, $v_2$, $\xi$ are real,
and $u$ and $v_1$ must be non-zero
in order for $r^\mu$ constructed from them to lie strictly inside $LC^+$.

Solution (\ref{psi-charged}) with non-zero $u$ preserves only a $U(N-2)$ symmetry,
while a normal solution lying on $LC^+$ and corresponding to $u = 0$
preserves a $U(N-1)$ symmetry.
For example, in the context of the two-Higgs-doublet model ($N=2$)
such a solution corresponds to a complete breaking of the electroweak symmetry group $SU(2)\times U(1)$.
Such phase breaks the electric charge conservation,
and makes the photon massive.

Conditions when this phase appears were established in \cite{mink}.
Since $\lr{r^\mu}$ that corresponds to such a non-symmetric phase is not restricted anymore to lie
on the surface of $LC^+$, the extremum condition of the potential takes a very simple form:
\be
B_{\mu\nu} r^\nu = A_\mu\,. \label{charged}
\ee
If $B_{\mu\nu}$ is non-singular, then solution of (\ref{charged}) always exists and is unique.
If the potential has any additional explicit symmetries, this symmetry is always conserved in
this phase. If $B_{\mu\nu}$ is singular, then depending on $A_\mu$ Eq.~(\ref{charged})
can have an empty set or a continuum of solutions.

Whether the solution of (\ref{charged}) corresponds to a physically realizable extremum
of the potential, depends on whether $a_\mu = (B^{-1})_{\mu\nu}A^\nu$ lies inside $LC^+$.
If it is so, then it can be a minimum or a saddle point.
It is a minimum (and necessarily the global minimum) when $B_{\mu\nu}$ is positive-definite
in the entire Minkowski space, i.e. when all $B_i <0$.

Search for the extrema on the forward lightcone $LC^+$ proceeds in the same
way as before. One again introduces equipotential surfaces ${\cal M}^C$,
but due to fixed eigenvalues $B_i$ their geometry can be different.
A typical ${\cal M}^C$ can now be any 3-quadric:
a 3-hyperboloid, a 3-ellipsoid, a 3-cone, or a 3-paraboloid.
The geometric reformulation of the problem remains unchanged: the search for the global minimum
corresponds to the search for the unique 3-quadric with the base point $a^\mu$
that touches but never intersects the forward lightcone.

As a result, virtually all the statements about the number of extrema and minima, about the
symmetries and their spontaneous violation remain the same.
The only difference is that $r^\mu$ can shift from the surface of $LC^+$ inwards,
and in order for an extremum on $LC^+$ to be a minimum, this shift
must also increase the potential. It means that Lagrange multiplier $\lambda$
in (\ref{lagrange}) must be positive. In fact, the eigenfrequencies $\omega^2$
of oscillations that make $N$-vectors $\psi_1$ and $\psi_2$ non-parallel
are proportional to $\lambda$. In 2HDM, they correspond to the masses of charged
Higgs bosons, \cite{mink,nachtmann}.

\section{Conclusions}\label{section-conclusions}

The aim of this paper was to provide an exhaustive description
of the general two-order-parameter model with
all possible $U(1)$-symmetric quadratic and quartic interaction terms
in the mean-field approximation.
The principal difficulty in the study of this model lies in the fact
that the Landau potential cannot be minimized with straightforward algebra.
Here we showed that despite this difficulty, one can still learn
a lot about the phase structure of this model.
We developed the Minkowski-space formalism based on the reparametrization symmetry
of the model and reformulated the minimization problem in
simple geometric terms.
We then proved several statements concerning the properties
of the model (the number of extrema and minima, symmetries and their violation,
the phase diagram).

The most general 2OP GL model can be viewed as a ``template''
for many particular realizations of the two-order-parameter
model used in various condensed matter problems.
We believe that by considering the most general case one can
gain a more transparent understanding of phenomena taking place
in particular situations, and one might even establish new links
between seemingly unrelated models.

We also note that the general method used in this paper
(consider the model in the most general case,
find the group of reparametrization symmetries, and using it
find the structure behind the model) is very general
and might turn out helpful in other circumstances.

There remain several directions for future work.
First, using dependence of the coefficients on temperature, pressure,
etc., one can trace in detail the sequence of phase transitions
as well as their critical properties in the mean field approximation.
Second, one should study modifications caused by the presence
of external fields (e.g. magnetic fields for two-gap superconductors)
and non-trivial boundary conditions.
Third, one should analyze effects beyond the mean-field approximation,
in particular, study how the symmetries of the model evolve
under the renormalization group flow.
Fourth, one should closely examine the existence, stability and
dynamics of the solitons.
Finally, extension of the approach to models with several order parameters
and/or with matric-valued OPs also appears to be feasible.
\\

I am thankful to Ilya Ginzburg, Otto Nachtmann and to the referees
for discussions and useful comments. This work was supported by
the Belgian Fund F.R.S.-FNRS via the contract of Charg\'e de
recherches and in part by grants RFBR 08-02-00334-a and
NSh-1027.2008.2

\appendix

\section{Manipulation with 4-tensor $B_{\mu\nu}$}\label{section-app-B}

Here we collect some simple facts about the real symmetric 4-tensor $B_{\mu\nu}$.
Let us first give explicit expressions for $B_{\mu\nu}$ with raised or lowered indices:
\be
B^{\mu\nu} =
\left(\begin{array}{cc}
B_{00} & B_{0j} \\
B_{0i} & B_{ij}
\end{array}
\right)\,,\quad
B^{\mu}{}_{\nu} = B_{\mu\alpha} g^{\alpha\nu} =
\left(\begin{array}{cc}
B_{00} & - B_{0j} \\
B_{0i} & - B_{ij}
\end{array}
\right)\,,\quad
B_{\mu\nu} =
\left(\begin{array}{cc}
B_{00} & - B_{0j} \\
- B_{0i} & B_{ij}
\end{array}
\right)\,.\label{lambdamunu3}
\ee
Here $i,j=1,2,3$. Note that $B^{\mu}{}_\nu$ is not symmetric.

Upon an $SO(3)$ rotation, $B_{00}$ remains invariant, while $B_{0i}$ and $B_{ij}$ transform
as real 3-vector and symmetric 3-tensor, respectively.
Upon a boost with ``rapidity'' $\eta$, say, along the first axis, $B^{\mu\nu}$ transforms as:
\be
B^{\mu\nu} =
\left(\begin{array}{cccc}
b_{00} & b_{01} & b_{02} & b_{03}\\
b_{01} & b_{11} & b_{12} & b_{13}\\
b_{02} & b_{12} & b_{22} & b_{23}\\
b_{03} & b_{13} & b_{23} & b_{33}
 \end{array}\right) \quad \to \quad
(B^\prime)^{\mu\nu} =
\left(\begin{array}{cccc}
b'_{00} & b'_{01} & b'_{02} & b'_{03}\\
b'_{01} & b'_{11} & b'_{12} & b'_{13}\\
b'_{02} & b'_{12} & b'_{22} & b'_{23}\\
b'_{03} & b'_{13} & b'_{23} & b'_{33}
\end{array}\right)\,,\nonumber
\ee
where
\bea
b'_{00} &=& {b_{00} - b_{11}\over 2} + {b_{00} + b_{11}\over 2}\cosh 2\eta + b_{01} \sinh 2\eta\,,\nonumber\\
b'_{11} &=& -{b_{00} - b_{11}\over 2} + {b_{00} + b_{11}\over 2}\cosh 2\eta + b_{01} \sinh 2\eta\,,\nonumber\\
b'_{01} &=& {b_{00} + b_{11}\over 2}\sinh 2\eta + b_{01} \cosh 2\eta\,,\nonumber\\
b'_{0a} &=& b_{0a} \cosh \eta + b_{1a} \sinh\eta\,,\quad
b'_{1a} = b_{1a} \cosh \eta + b_{0a} \sinh\eta\,,\quad b'_{ab} = b_{ab}\,,\quad a,b=2,3\,,\nonumber
\eea
The eigenvalues $B_i$ and eigenvectors $e_{(i)}^\mu$ of $B_{\mu\nu}$ are defined according to
\be
B_{\mu\nu} e_{(i)}^\nu = B_i\, g_{\mu\nu} e_{(i)}^\nu\,,\quad\mbox{or equivalently}\quad
B_{\mu}{}^{\nu} e_{(i)\, \nu} = B_i\, e_{(i)\, \mu}\,.\label{eigen}
\ee
The fact that $B_{\mu}{}^{\nu}$ is not symmetric
implies that the some eigenvalues might be, in general, complex.
However, as proved below, positive definiteness of $B_{\mu\nu}$
on the forward lightcone $LC^+$ ensures that they are real.

In the diagonal basis one has:
$$
B_{\mu\nu} =
\left(\begin{array}{cccc}
B_0 & 0 & 0 & 0\\
0 & -B_1 & 0 & 0\\
0 & 0 & -B_2 & 0\\
0 & 0 & 0 & -B_3 \end{array}\right)\,,\quad
B_{\mu}{}^{\nu} =
\left(\begin{array}{cccc}
B_0 & 0 & 0 & 0\\
0 & B_1 & 0 & 0\\
0 & 0 & B_2 & 0\\
0 & 0 & 0 & B_3 \end{array}\right)\,.
$$
If one considers a quadratic form in the space of 4-vectors $p^\mu$ constructed on $B_{\mu\nu}$,
then in the diagonal basis it looks as
$$
B_{\mu\nu} p^\mu p^\nu = B_0 p_0^2 - \sum_i B_i p_i^2\,.
$$
This quadratic form is positive definite in the entire space of non-zero vectors $p^\mu$, if and only if
all $B_i$ are {\em negative}.

\section{Positive definiteness of $V_4$}\label{section-app-positive}

A potential stable in a strong sense was defined as the one whose quartic part $V_4$
is strictly positive definite in the entire space of the order parameters $\psi_i$ except the origin.
In the orbit space it corresponds to $B_{\mu\nu} r^\mu r^\nu$ being positive definite
on the entire forward lightcone $LC^+$ expect the apex. This criterion can be formulated
in terms of the eigenvalues of $B^{\mu\nu}$:
\begin{proposition}\label{appA}
Tensor $B^{\mu\nu}$ is positive definite on the future lightcone expect the apex
if and only if the following conditions
are met:\\
(1) $B^{\mu\nu}$ is diagonalizable by an $SO(1,3)$ transformation,\\
(2) all spacelike eigenvalues $B_i$ are smaller than the timelike eigenvalue $B_0$.
\end{proposition}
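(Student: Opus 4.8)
The plan is to prove the two implications separately, treating the ``if'' direction as a one-line verification and concentrating all the effort on the ``only if'' direction. For sufficiency, I would work directly in the frame guaranteed by (1), where $B_{\mu\nu}r^\mu r^\nu = B_0 r_0^2 - \sum_i B_i r_i^2$. Using $r_0^2 = \sum_i r_i^2$ on the lightcone gives
\be
B_{\mu\nu}r^\mu r^\nu = \sum_i (B_0-B_i)\, r_i^2 \qquad \mbox{on}\quad LC^+\,,
\ee
and by (2) every coefficient $B_0-B_i$ is positive while, away from the apex, not all $r_i$ vanish; hence the form is strictly positive. This same identity is reused at the very end, so it does double duty.

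For necessity the key object is the $g$-self-adjoint operator $L$ determined by $B_{\mu\nu}$ and the metric as in (\ref{eigen}), i.e. the operator satisfying $g(Lx,y)=B_{\mu\nu}x^\mu y^\nu=g(x,Ly)$. I would first note the concrete reformulation: writing a forward null vector as $r^\mu=r_0(1,\vec n)$ shows that positivity on $LC^+$ is equivalent to $f(\vec n)=B_{00}+2B_{0i}n_i+B_{ij}n_in_j>0$ for all unit $\vec n$. The real engine, however, is the classification of the normal forms of $L$: I intend to exclude every normal form except the real-diagonalizable one by exhibiting, in each bad case, a forward null vector on which $B$ is \emph{non}-positive.

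The two obstruction cases are handled as follows. If $L$ carries a genuine Jordan block, its head $v_1$ is a true eigenvector with some eigenvalue $\lambda$, and self-adjointness applied to the chain relation $Lv_2=\lambda v_2+v_1$ forces $g(v_1,v_1)=0$; then $B(v_1,v_1)=g(Lv_1,v_1)=\lambda g(v_1,v_1)=0$, so $v_1$ (oriented into $LC^+$) is a forward null vector that annihilates the form, contradicting strict positivity. If $L$ has a complex eigenvalue $\alpha\pm i\beta$ with $\beta\neq0$ and eigenvector $u\pm iv$, then $\mathrm{span}(u,v)$ is a $2$-dimensional real $L$-invariant plane; self-adjointness yields $g(u,u)=-g(v,v)$, and since Lorentzian signature admits no $2$-dimensional totally null subspace this plane is genuinely indefinite. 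On it one finds $B=\alpha\,g+\beta\,h$ for an auxiliary indefinite form $h$, so on a null direction $n$ of the plane $B(n,n)=\beta\,h(n,n)$; the two null rays of the plane carry values of $h$ of opposite sign, hence one of them makes $B(n,n)<0$. I expect this complex-eigenvalue step to be the main obstacle, as it requires the explicit bookkeeping of the two quadratic forms $g$ and $B$ restricted to the invariant plane and the verification that $h$ changes sign between the two null rays.

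Once both obstructions are excluded, $L$ is diagonalizable over $\mathbb{R}$, and it remains to upgrade this to $SO(1,3)$-diagonalizability. Eigenvectors with distinct eigenvalues are automatically $g$-orthogonal, since $(\lambda_x-\lambda_y)g(x,y)=g(Lx,y)-g(x,Ly)=0$. I would then show each eigenspace is $g$-nondegenerate: a null direction of $g$ lying inside one eigenspace would, by orthogonality to all the other eigenspaces, be $g$-orthogonal to the whole Minkowski space, contradicting nondegeneracy of $g$. Therefore a $g$-orthonormal eigenbasis exists, with exactly one timelike vector (eigenvalue $B_0$) and three spacelike vectors (eigenvalues $B_i$), which is precisely condition (1) and the canonical shape (\ref{Bi}). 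Finally, re-inserting this diagonal form into the identity of the first paragraph, strict positivity on $LC^+$ forces $B_0-B_i>0$ for every $i$, giving condition (2) and closing the argument.
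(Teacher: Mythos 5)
Your proof is correct and follows the same basic strategy as the paper's: sufficiency by direct evaluation in the diagonal frame, and necessity by exhibiting a forward null vector on which the form fails to be strictly positive whenever the operator is not real-diagonalizable. Your complex-eigenvalue argument is the real-plane version of the paper's computation, which works instead with the complexified eigenvectors $p^\mu$ and $p^{*\mu}$ directly, normalizes $p^\mu p_\mu=1$, and evaluates $B_{\mu\nu}r^\mu r^\nu = 2|b||c|^2\cos(2\phi_c+\phi_b)$ on the null combinations $r=cp+c^*p^*$; the two are equivalent, and your bookkeeping of the auxiliary form $h$ on the two null rays of the invariant plane does check out (on a null direction the condition $(a^2-b^2)g(u,u)+2ab\,g(u,v)=0$ forces $h=t\left(g(u,u)^2+g(u,v)^2\right)$ with $t$ of opposite sign on the two rays). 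The one place where you are genuinely more careful than the paper is the Jordan-block obstruction: the paper passes from ``all eigenvalues are real'' to ``the eigenvectors can be chosen real and orthonormal'' and only implicitly rules out null eigenvectors via the remark that they would give a flat direction of $V_4$, whereas you isolate the chain relation $Lv_2=\lambda v_2+v_1$, derive $g(v_1,v_1)=0$ from self-adjointness, and conclude $B(v_1,v_1)=0$ on a forward null vector. That extra step closes a small logical gap in the published argument, so your version is, if anything, the more complete of the two.
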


\begin{proof}
Obviously, if $B^{\mu\nu}$ satisfies conditions (1) and (2),
then the positive definiteness follows immediately.
So, one needs to prove that these conditions follow from the positive definiteness.

The first step is to prove that the positive definiteness on $LC^+$
implies that all the eigenvalues of $B^{\mu\nu}$ are real.

Suppose, on the contrary, that there exists a pair
of non-zero complex eigenvalues, $b$ and $b^*$,
with respective (and necessarily complex)
eigenvectors $p^\mu$ and $q^\mu$:
$$
{B^{\mu}}_{\nu} p^\nu = b p^\mu\,,\quad
{B^{\mu}}_{\nu} q^\nu = b^* q^\mu\,.
$$
One can show that there can be only one pair of complex eigenvalues, thus, $b$
is non-degenerate.
Since $B^{\mu}{}_{\nu}$ is real, $q^\mu \propto p^{\mu *}$ (and can be taken equal to $p^{\mu *}$).
These eigenvectors are orthogonal, $p^\mu q_\mu = 0$, which follows from the standard
argument due to $b \not = b^*$, and can be normalized so that
$p^\mu p_\mu = q^\mu q_\mu = 1$.

Consider now a non-zero real vector $r^\mu$,
$$
r^\mu = c p^\mu + c^* p^{* \mu}\,,
$$
such that $r^\mu r_\mu = c^2 + c^{* 2} = 2|c|^2\cos(2\phi_c) = 0$.
At fixed $|c|$, four such vector are possible. Take two of them: $r^\mu_{1/4}$ and $r^\mu_{3/4}$,
corresponding to $\phi_c = \pi/4$ and $3\pi/4$.
The quadratic form calculated on these vectors is
$$
B_{\mu\nu} r^\mu r^\nu = b c^2 + b^* c^{* 2}
= 2|b||c|^2\cos\left(2\phi_c + \phi_b\right)
= \mp 2|b||c|^2\sin(\phi_b)
$$
for $r^\mu_{1/4}$ and $r^\mu_{3/4}$, respectively.
Since $b$ is not purely real, $\sin(\phi_b) \not = 0$,
in one of the two cases $B_{\mu\nu} r^\mu r^\nu < 0$,
which contradicts the assumption.

After all the eigenvalues of $B_{\mu\nu}$ are proved to be real, the eigenvectors also
can be chosen all real and orthonormal.
These eigenvectors cannot lie on $LC^+$ (otherwise there would be a flat direction of $V_4$),
so there is one vector inside $LC^+$ with positive norm, norm $p_0^\mu p_{0 \mu} = 1$,
and three spacelike eigenvectors with negative norms $p_i^\mu p_{i \mu} = -1$
for each $i=1,2,3$.
Thus, the transformation matrix $T$ that diagonalizes $B^{\mu\nu}$
is real, and after diagonalization $B^{\mu\nu}$ takes form
$\mathrm{diag}(B_0,\, -B_1,\, -B_2,\, -B_3)$.
Note that transformation $T$ also conserves norm,
so it can be realized as a transformation from the proper Lorentz group.

Now, the requirement that $B^{\mu\nu}$ is positive definite on $LC^+$ reads:
$$
B_0 - (B_1 \sin\theta\cos\phi + B_2 \sin\theta\sin\phi + B_3\cos\theta) > 0
$$
for all $0\le \theta\le \pi$ and $\phi$. This holds when $B_0$ is larger than any $B_i$.
\end{proof}

Let us also see what changes for a potential stable in a weak sense.
First, the statement that the eigenvalues are real and therefore
eigenvectors can be also chosen real remains valid in this case.
However, at least one eigenvector must now lie on the surface of $LC^+$.
This means that $B^{\mu\nu}$ is in general not diagonalizable by the $SO(1,3)$
transformation group. More details are given in Section~\ref{section-weak}.

\section{Necessary condition for the spontaneous violation: explicit calculations}\label{section-app-condition}

Here, we show that the global minimum
of the potential with all distinct $B_i$ and $A_\mu = (A_0,\,A_1,\,A_2,\,0)$
can spontaneously break the $Z_2$ symmetry given by reflections of the third axis,
only if $B_3$ is the largest spacelike eigenvalue:
\be
B_3>B_1,\,B_2\,.\label{appB3}
\ee
We assume, of course, that the vector $A_\mu$ lies inside the caustic cone:
\be
{A_1^2 \over (B_3-B_1)^2} + {A_2^2 \over (B_3-B_2)^2} < {A_0^2 \over (B_0-B_3)^2}\,.\label{appcaustic}
\ee
This will be done by comparing the depth of the potential at the extrema
that conserve and violate this symmetry. We will see that (\ref{appB3})
is necessary for the pair of symmetry violating extrema to be the deepest ones.

If $\lr{r^\mu}$ is an extremum point, then the potential at this point is
$$
V = - A_\mu \lr{r^\mu} + {1 \over 2} B_{\mu\nu} \lr{r^\mu}\lr{r^\nu}
= - {1 \over 2}A_\mu \lr{r^\mu} = - {1 \over 2} B_{\mu\nu} \lr{r^\mu}\lr{r^\nu}\,.
$$
According to (\ref{lagrange2}), the symmetry violating extrema take place at
$$
r_0 = r_0^{(3)} = {A_0 \over B_0-B_3}\,.
$$
The depth of the potential at this point is
$$
|V_3| = {1 \over 2} \left({A_0^2 \over B_0-B_3} + {A_1^2 \over B_3-B_1} + {A_2^2 \over B_3-B_2} \right)\,.
$$
Pick up another extremum (necessarily a symmetry-conserving one). It takes place at another $r_0$,
which we rewrite as $r_0 \equiv r_0^{(3)}\cdot x$.
The depth of the potential at this point is
\bea
|V| & = & {1 \over 2}r_0(A_0 - A_1 n_1 - A_2 n_2) \nonumber\\
&=& {1 \over 2} x \left[{A_0^2 \over B_0-B_3} - {A_1^2 \over (B_0-B_3)-(B_0-B_1)x}
- {A_2^2 \over (B_0-B_3)-(B_0-B_2)x}\right]\,.\nonumber
\eea
Here, $n_1$, $n_2$ are
$$
n_1 = {A_1 \over A_0 -(B_0-B_1)r_0}\,,\quad n_2 = {A_2 \over A_0 -(B_0-B_2)r_0}\,,\quad
n_1^2 + n_2^2 = 1\,.
$$
Note that the last equation, in fact, is the fourth-order equation for $r_0$.

Difference between the two depths can be presented, after some algebra, in the following way:
\be
|V_3| - |V| = {(1-x)A_0 \over 2} \left({A_0 \over B_0-B_3} - {A_1 n_1 \over B_1-B_3}
- {A_2 n_2 \over B_2-B_3}\right)\,.
\ee
The expression in brackets can be rewritten as $\alpha_\mu n^\mu$,
where $n^\mu = (1,\,n_1,\,n_2,\,0)$ and
$$
\alpha^\mu = \left({A_0 \over B_0-B_3},\, {A_1 \over B_1-B_3},\, {A_2 \over B_2-B_3},\, 0\right)
\,.
$$
From the caustic condition (\ref{appcaustic}) one obtains $\alpha^\mu \alpha_\mu > 0$,
i.e. the four-vector $\alpha_\mu$ lies strictly inside $LC^+$.
On the other hand, $n^\mu$ lies on the surface of $LC^+$, and therefore,
$\alpha_\mu n^\mu > 0$. Thus, the sign of the depth difference is given solely by the value of $x$.

If $B_3$ is the largest spacelike eigenvalue, then all symmetry-conserving extrema
correspond to $r_0 < r_0^{(3)}$, i.e. to $x<1$. Therefore, all of them lie above
the symmetry-violating points (and are saddle points, according to Proposition~\ref{prop-two-minima}).
If $B_3$ is not the largest spacelike eigenvalue, then there will necessarily be another
extremum with $r_0> r_0^{(3)}$, which corresponds to $x>1$ and, therefore, lies deeper that
the symmetry-violating points (which are saddle points in this case).

An alternative, somewhat longer way to prove condition (\ref{appB3}) using
geometric properties of the potential was given in \cite{minknew}.

\section{Critical exponent for the distance squared function defined on a planar curve}\label{section-app-critical}

Let $\gamma(t)$ be a regular parametrization of a smooth plane curve,
see e.g. \cite{diffgeom}.
Take a point on this curve, assuming that it corresponds to $t=0$,
and choose the coordinate frame at this point such that axis $x$ is along the tangent
and axis $y$ is along the normal to the curve at this point, see Fig.~\ref{fig-curve}.
The curve then can be parametrized as $\gamma(t) = (X(t),\,Y(t))$, with
\be
X(t) = t\,,\quad Y(t) = {t^2\over 2 R_0} + a_n t^n + o(t^n)\,.\label{curve}
\ee
Here, $R_0$ is the curvature radius of $\gamma$ at $t=0$, while $n>2$ describes
the next higher order term.

\begin{figure}[!htb]
   \centering
\includegraphics[width=5cm]{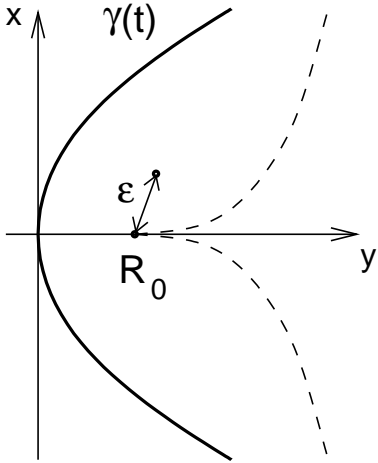}
\caption{Regular curve $\gamma(t)$ and its evolute (dashed line); see text.}
   \label{fig-curve}
\end{figure}

Now, select a point $\vec r = (x,y)$ on this plane and calculate the distance squared
from this point to the points of the curve, $\rho^2(t) \equiv (X(t)-x)^2 + (Y(t)-y)^2$.
This function has points of extrema at some values of $t$.
For a generic point $\vec r$, $\rho^2(t)$ will have a generic value at $t=0$.
However, if $\vec r$ lies on the $y$ axis, then $\rho^2(t)$ has a maximum or minimum at $t=0$.
At a special (``critical'') point along this axis, $\vec r = (0,\,R_0)$,
$$
\rho^2(t) = X^2(t) + (Y(t)-R_0)^2 \approx \mbox{const} - 2R_0 a_n t^n + {t^4 \over 4R_0^2}\,.
$$
Therefore, the osculating circle has at 3-fold or 4-fold contact with the curve $\gamma$,
depending on whether $n=3$ or $n\ge 4$. As the point $\vec r$ moves along the $y$ axis and passes
through $(0,\,R_0)$, a bifurcation takes place of the function $\rho^2(t)$.

There are two possibilities to consider.
If $\gamma(0)$ is a generic point on a generic smooth curve, then expansion (\ref{curve})
starts from $n=3$, and at small $t$, $\rho^2(t)$ has a minimum and a maximum near at $t=0$,
both of which cannot be the global ones.
Instead, if $\gamma(0)$ is an apex of the curve (this is the situation
shown in Fig.~\ref{fig-curve}), then expansion (\ref{curve})
starts from $n=4$, and at this critical point the minimum of $\rho^2(t)$
splits into minimum/maximum/minimum sequence.
This is the only type of bifurcation the global minimum of the function
$\rho^2(t)$ can participate, and we now focus on it.

Let us shift $\vec r$ away from the critical point by a small amount,
$\vec r = (\varepsilon_x,\,R_0 + \varepsilon_y)$, and recalculate $\rho^2(t)$:
\be
\rho^2(t) \approx \mbox{const} + t^4\left({1 \over 4R_0^2} - 2 a_4 R_0\right) - 2\varepsilon_x t
-{\varepsilon_y t^2 \over R_0}\,.\label{rho2}
\ee
For a generic curve $\gamma(t)$ (the second order curves included) the coefficient in front of $t^4$ is non-zero.
Finding the minimum of (\ref{rho2}), expanding $\rho^2$ near it, and extracting the coefficient
in front of the quadratic term $(t-t_{min})^2$, which should behave as $\propto |\varepsilon|^\delta$,
gives us the value of $\delta$. One can easily find that it depends
on the {\em direction of approach} to the critical point:
\bea
\mbox{generic direction}\  (\varepsilon_x \not = 0) & \to & \delta = {2/ 3}\,,\nonumber\\
\mbox{symmetric approach}\  (\varepsilon_x = 0) & \to & \delta = 1\,,\label{delta2}
\eea
The latter case, by construction, effectively corresponds to the standard
one-order-parameter Ginzburg-Landau model.

This study can be generalized to the $n+1$-dimensional case.
Given a smooth $n$-manifold $\gamma(t_1,\dots,t_n)$, one can choose
the coordinate frame a point of this manifold, that would correspond to the global
minimum, align the coordinate axes with the eigenvectors
of the quadratic term, and parametrize the manifold as
\be
X_1 = t_1\,,\dots,\, X_n = t_n\,,\quad Y \approx {t_1^2\over 2R_1} + \cdots + {t_n^2\over 2R_n} + a_4 t_1^4\,.
\ee
Here, we labeled the axes according to $R_1 < R_2 < \cdots < R_n$,
the bifurcation we study is at $\vec r = (0,\dots,0,R_1)$.
The calculations can be repeated giving the same result: if $\varepsilon_x \not = 0$, then $\delta = 2/3$,
otherwise $\delta = 1$. Now $\varepsilon$ must be understood as the distance from the closest among
the cusp points of the evolute.
This result does not depend on the particular
shape of the manifold, since it is essentially driven by the $4$-th order nature
of the bifurcation point.

\end{document}